\documentclass
[preprint,byrevtex,nofootinbib,10pt,balancelastpage,titlepage,bibnotes,twocolumn,pre]{revtex4}%
\usepackage{amsfonts}
\usepackage{amsmath}
\usepackage{amssymb}
\usepackage{graphicx}%
\setcounter{MaxMatrixCols}{30}
%TCIDATA{OutputFilter=latex2.dll}
%TCIDATA{Version=5.50.0.2953}
%TCIDATA{CSTFile=revtex4.cst}
%TCIDATA{Created=Tuesday, April 05, 2011 14:16:09}
%TCIDATA{LastRevised=Tuesday, March 27, 2018 21:09:17}
%TCIDATA{<META NAME="GraphicsSave" CONTENT="32">}
%TCIDATA{<META NAME="PrintViewPercent" CONTENT="100">}
%TCIDATA{<META NAME="SaveForMode" CONTENT="2">}
%TCIDATA{BibliographyScheme=Manual}
%TCIDATA{<META NAME="DocumentShell" CONTENT="Articles\SW\REVTeX 4">}
%TCIDATA{Language=American English}
%BeginMSIPreambleData
\providecommand{\U}[1]{\protect\rule{.1in}{.1in}}
%EndMSIPreambleData
\newtheorem{theorem}{Theorem}

\newtheorem{claim}[theorem]{Claim}
\newtheorem{conclusion}[theorem]{Conclusion}
\newtheorem{condition}[theorem]{Condition}

\newtheorem{definition}[theorem]{Definition}

\newtheorem{proposition}[theorem]{Proposition}
\newtheorem{remark}[theorem]{Remark}

\newenvironment{proof}[1][Proof]{\noindent\textbf{#1.} }{\ \rule{0.5em}{0.5em}}
%BeginMSIPreambleData
\ifx\pdfoutput\relax\let\pdfoutput=\undefined\fi
\newcount\msipdfoutput
\ifx\pdfoutput\undefined\else
\ifcase\pdfoutput\else
\msipdfoutput=1
\ifx\paperwidth\undefined\else
\ifdim\paperheight=0pt\relax\else\pdfpageheight\paperheight\fi
\ifdim\paperwidth=0pt\relax\else\pdfpagewidth\paperwidth\fi
\fi\fi\fi
%EndMSIPreambleData
\begin{document}
\preprint{UATP/1804}
\title{Correcting the Mistaken Identification of Nonequilibrium Microscopic Work}
\author{P.D. Gujrati}
\email{pdg@uakron.edu}
\affiliation{Department of Physics, Department of Polymer Science, The University of Akron,
Akron, OH 44325}

\begin{abstract}
Nonequilibrium work-Hamiltonian connection at the level of a microstate plays
a central role in diverse modern branches of statistical thermodynamics
(fluctuation theorems, quantum thermodynamics, stochastic thermodynamics,
etc.). The energy change $dE_{k}$ for the $k$th microstate is \emph{always}
but \emph{erroneously} equated with the external work $d\widetilde{W_{k}}$
done \emph{on} the microstate, whereas the correct identification is
$dE_{k}=-dW_{k}$, $dW_{k}$ being the generalized work done \emph{by }the
microstate; the average of $d_{\text{i}}W_{k}=d\widetilde{W_{k}}+dW_{k}$
represents the dissipated work. We show that $d_{\text{i}}W_{k}$ is ubiquitous
even in a purely mechanical system, let alone a nonequilibrium system.\ As
$d_{\text{i}}W_{k}$ is not included in $d\widetilde{W_{k}}$, the current
identification $dE_{k}=d\widetilde{W_{k}}$ does not account for
irreversibility such as in the Jarzynski equality (JE). Using $dW_{k}$ to
account for irreversibility, we obtain a new work relation that works even for
free expansion, where the JE fails. In the new work relation, $\Delta
E_{k}=-\Delta W_{k}$ depends only on the energies of the initial and final
states and not on the actual process. This makes the new relation very
different from the JE. Thus, the correction has far-reaching consequences and
requires reassessment of current applications in theoretical physics.

\end{abstract}
\date{January 31, 2017}
\maketitle

\section{Introduction}

\subsection{Motivation}

In this work, we are mainly interested in an interacting system $\Sigma$ in
the presence of a medium $\widetilde{\Sigma}$ as shown in Fig.
\ref{Fig_System}; the latter will be taken to be a collection of two parts: a
work source $\widetilde{\Sigma}_{\text{w}}$ and a heat source $\widetilde
{\Sigma}_{\text{h}}$, both of which can interact with the system $\Sigma$
directly but not with each other. We will continue to use $\widetilde{\Sigma}$
to refer to both of them together. The collection $\Sigma_{0}=\Sigma
\cup\widetilde{\Sigma}$ forms an isolated system, which we assume to be
stationary, with all of of its observables such as its energy $E_{0}$, its
volume $V_{0}$, its number of particles (we assume a single species) $N_{0}$
constant in time, even if there may be internal changes going on when its
parts are out of equilibrium. These internal changes require some internal
variables for their description \cite{deGroot,Prigogine,Maugin}. In the
following, we will assume $\widetilde{\Sigma}$ to be always in equilibrium
(which requires it to be extremely large compared to $\Sigma$) so that any
irreversibility going on within $\Sigma_{0}$ is ascribed to $\Sigma$ alone,
and is caused by processes such as dissipation due to viscosity, internal
inhomogeneities, etc. that are internal to the system. Moreover, we assume
additivity of volume, a weak interaction between, and quasi-independence of,
$\Sigma$ and $\widetilde{\Sigma}$; the last two conditions ensure that the
energies and entropies are additive
\cite{Gujrati-I,Gujrati-II,Gujrati-Entropy2,Gujrati-Entropy1,Gujrati-III} but
also impose some restriction on the size of $\Sigma$ in that it cannot be too
small. In particular, the size should be at least as big as the correlation
length for quasi-independence as discussed earlier
\cite{Gujrati-II,Gujrati-Entropy2,Gujrati-Entropy1}. In this work, we will
assume that all required conditions necessary for the above-mentioned
additivity are met.

All quantities pertaining to $\Sigma$\ carry no suffix; those for
$\widetilde{\Sigma}$ carry a tilde, while those of $\Sigma_{0}$ carry a suffix
$0$. As $\widetilde{\Sigma}$ is always in equilibrium, all its fields have the
same value as for $\Sigma_{0}$ so that its temperature, pressure, etc. are
shown as $T_{0},P_{0}$, etc. as seen in Fig. \ref{Fig_System}. To simplify our
discussion, we are going to mostly restrict our discussion to two parameters
in the Hamiltonian; however, later in Secs. \ref{Sec-Theory}%
-\ref{Sec-WorkTheorem}, we will consider a general set $\mathbf{W}$ of work
parameters. In equilibrium (EQ), the state of $\Sigma$ is described by two
observables\textbf{ }$E$ and $V$ of which $V$ appears as a parameter in its
Hamiltonian $\mathcal{H}(\mathbf{z},V)$; here, $\mathbf{z}$ is the collection
of coordinates and momenta of the $N$ particles. Throughout this work, $N$ is
kept fixed so we will not exhibit it unless necessary. We can manipulate $V$
from the outside through $\widetilde{\Sigma}_{\text{w}}$, which exerts an
external "force" such as the pressure to do some "work." For this work, $V$ is
a symbolic representation of an externally controlled parameter, which is
commonly denoted by $\lambda$ in the current literature and whose nature is
determined by the experimental setup. Thus, $V$ here should be considered a
parameter in a wider sense than just the volume and $P$ its conjugate force;
see Fig. (\ref{Fig_Piston-Spring}).

\begin{definition}
\label{Def-State}At the microscopic level, the state of the system is
specified by the set of \emph{microstates} $\left\{  \mathfrak{m}_{k}\right\}
$, their energy set $\left\{  E_{k}\right\}  $ and their \emph{probability}
set $\left\{  p_{k}\right\}  $. For the same set $\left\{  \mathfrak{m}%
_{k},E_{k}\right\}  $, different choices of $\left\{  p_{k}\right\}  $
describes different states, one of which corresponding to $\left\{
p_{k\text{eq}}\right\}  $ uniquely specifies an EQ state; all other states are
called nonequilibrium (NEQ) states.
%TCIMACRO{\FRAME{ftbpFU}{3.2145in}{2.6238in}{0pt}{\Qcb{An isolated system
%$\Sigma_{0}$ consisting of the system $\Sigma$ in a surrounding medium
%$\widetilde{\Sigma}$. The medium and the system are characterized by their
%fields $T_{0},P_{0},...$ and $T(t),P(t),...$, respectively, which are
%different when the two are out of equilibrium. }}{\Qlb{Fig_System}%
%}{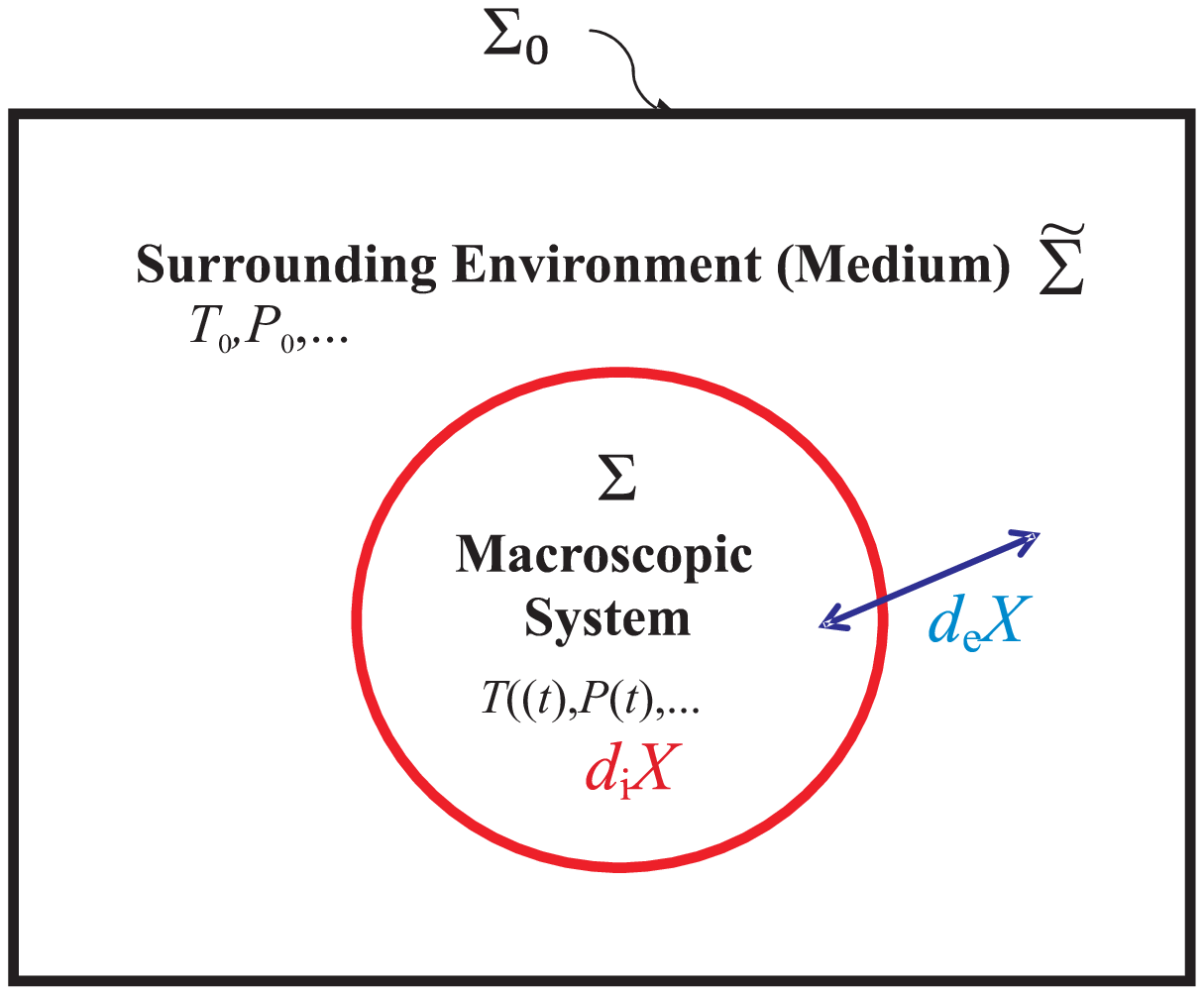}{\special{ language "Scientific Word";
%type "GRAPHIC";  maintain-aspect-ratio TRUE;  display "USEDEF";
%valid_file "F";  width 3.2145in;  height 2.6238in;  depth 0pt;
%original-width 4.9018in;  original-height 3.9946in;  cropleft "0";
%croptop "1";  cropright "1";  cropbottom "0";
%filename 'Interacting-System-Exchange0.eps';file-properties "XNPEU";}} }%
%BeginExpansion
\begin{figure}
[ptb]
\begin{center}
\includegraphics[
height=2.6238in,
width=3.2145in
]%
{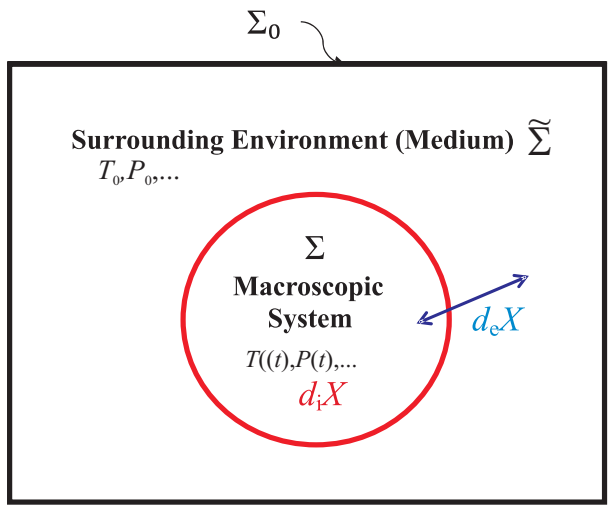}%
\caption{An isolated system $\Sigma_{0}$ consisting of the system $\Sigma$ in
a surrounding medium $\widetilde{\Sigma}$. The medium and the system are
characterized by their fields $T_{0},P_{0},...$ and $T(t),P(t),...$,
respectively, which are different when the two are out of equilibrium. }%
\label{Fig_System}%
\end{center}
\end{figure}
%EndExpansion

\end{definition}

The EQ entropy is a state function $S(E,V)$. Away from EQ, we also need an
additional set of independent extensive internal variables to specify the
state \cite{deGroot,Prigogine,Maugin}; we take a single variable $\xi$ for our
study to simplify the presentation \cite{Note-xi-dependence}. In this case,
$V$ and $\xi$ appear as parameters in the Hamiltonian $\mathcal{H}%
(\mathbf{z},V,\xi)$. However, the discussion can be readily extended to many
internal variables. The force conjugate to $\xi$\ is known as the
\emph{affinity }and conventionally denoted by $A$
\cite{deGroot,Prigogine,Maugin}. A NEQ state for which its entropy is a state
function $S(E,V,\xi)$ is said to be an \emph{internal equilibrium} (IEQ) state
\cite{Gujrati-I,Gujrati-II}. We will only consider IEQ states in this work for
simplicity. It is well known that the energy $E=\mathcal{H}_{V,\xi}$ does not
change at fixed $V$ and $\xi$ as $\mathbf{z}$ evolves in accordance with
Hamilton's equations of motion. Therefore, $\mathbf{z}$ plays no role in the
performance of work so we will usually not exhibit it unless clarity is needed.

Focus on the Hamiltonian $\mathcal{H}(V,\xi)$ provides us with the advantage
of obtaining a microscopic description of NEQ thermodynamics, where a central
role is played by the microscopic work-Hamiltonian relation at the level of
microstates\emph{ }$\left\{  \mathfrak{m}_{k}\right\}  $ of a NEQ system
$\Sigma$. The work relation is a key ingredient in developing a microscopic
NEQ statistical thermodynamics, where the emphasis is in the application of
the first law in diverse branches including but not limited to
\emph{nonequilibrium work theorems,}
\cite{Bochkov,Jarzynski,Crooks,Pitaevskii} \emph{stochastic}
\emph{thermodynamics,} \cite{Sekimoto,Seifert} and \emph{quantum
thermodynamics,} \cite{Lebowitz,Alicki} to name a few. Once microscopic work
is identified, microscopic heat can also be identified by invoking the first
law. This makes identifying work of primary importance in NEQ thermodynamics.
Unfortunately, this endeavor has given rise to a controversy about the actual
meaning of NEQ work, which apparently is far from settled
\cite{Cohen,Jarzynski-Cohen,Sung,Gross,Jarzynski-Gross,Peliti,Rubi,Jarzynski-Rubi,Rubi-Jarzynski,Peliti-Rubi,Rubi-Peliti,Pitaevskii,Bochkov}%
.

\subsection{Mistaken Identity\label{Sec-MistakenIdentity}}

The above controversy is distinct from the confusion about the meaning of work
and heat in classical nonequilibrium thermodynamics \cite{Fermi,Kestin}
involving a \emph{system-intrinsic }(SI)\emph{ }or\emph{ medium-intrinsic}
(MI)\ description, the latter only recently having been clarified
\cite{Gujrati-Heat-Work0,Gujrati-Heat-Work,Gujrati-I,Gujrati-II,Gujrati-III,Gujrati-Entropy1,Gujrati-Entropy2}%
; see Sec. \ref{Sec-NewConcepts} for more details and clarification.

\begin{definition}
\label{Def-SI-MI}By a SI-description (MI-description), we mean a description
that uses quantities associated with $\Sigma$ only (with $\widetilde{\Sigma}$
only); see for example Eqs.(\ref{SI-Work}) for $dW$\ and (\ref{MI-Work}) for
$d\widetilde{W}$.
\end{definition}

The SI-work $dW$ is the work done by the system and the MI-work $d\widetilde
{W}$ is the work done by the medium. At the level of microstates, one is faced
with the fact that both of these works $dW_{k}$\ and $d\widetilde{W}_{k}%
$\ cannot be related to the change $dE_{k}$ in the microstate energy of
$\mathfrak{m}_{k}$. Indeed, as pointed out recently
\cite{Gujrati-GeneralizedWork}, the identification of microscopic work has
proven erroneous in that the external work $d\widetilde{W}_{k}$, an MI
quantity for $\mathfrak{m}_{k}$, is identified with the SI energy change
$dE_{k}$ for $\mathfrak{m}_{k}$, whereas in a NEQ state, the thermodynamics of
the medium and the system are very different. As pointed above, it is $\Sigma
$\ that experiences dissipation \cite{Fermi,Kestin,Woods}, which is partly
responsible for the irreversible entropy generation $d_{\text{i}}S$ in the
system. This dissipation is not captured by $d\widetilde{W}_{k}$ in its
current formulation as we will see in this work. This is related to a
surprising fact hitherto unappreciated that at the microscopic level,
$d\widetilde{W}_{k}$ and $dE_{k}$ always differ in a thermodynamic system,
even when dealing with a reversible process as shown by a thermodynamic
example in Sec. \ref{Sec-ForceImbalance-Thermodynamic}. In fact, they differ
even for a purely mechanical system as shown by the mechanical example in Sec.
\ref{Sec-Example-MechSystem}.

\textbf{Jarzynski Equality}: As we will show here (this has been briefly
reported earlier \cite{Gujrati-GeneralizedWork}), not recognizing this subtle
fact has given rise to the "mistaken identity"
\begin{equation}
d\widetilde{W}_{k}\overset{?}{=}dE_{k}, \label{MistakenIdentity}%
\end{equation}
between an MI-work and SI-energy change in the current literature; see for
example Jarzynski \cite{Jarzynski} who exploits the mistaken identity to prove
by now the famous Jarzynski equality (JE) discussed below in Sec.
\ref{Sec-JarzynskiProcess}.

Following Prigogine \cite{Prigogine}, see also Fig. (\ref{Fig_System}) and
Sec. \ref{Sec-Notation} for clarification, we partition
\begin{equation}
dE_{k}=d_{\text{e}}E_{k}+d_{\text{i}}E_{k}, \label{E_k-partition}%
\end{equation}
in which $d_{\text{e}}E_{k}\equiv d\widetilde{W}_{k}$ represents the change
caused by the exchange with $\widetilde{\Sigma}$ through external work and
$d_{\text{i}}E_{k}$ represents the change caused by internal processes going
on within $\Sigma$. As the work $dW_{k}$\ is the work done by $\mathfrak{m}%
_{k}$ at the expense of its energy loss, we have $dW_{k}=-dE_{k}$ as we will
establish here. We will show\ by examples and arguments that $d_{\text{i}%
}E_{k}\neq0$ is \emph{ubiquitous} not only in purely mechanical systems but
also in a reversible process and is \emph{necessary for dissipation}. The main
thesis of this work to be established here is the following

\begin{proposition}
\label{Proposition-Goal}It is the contribution $d_{\text{i}}E_{k}$ that is
necessary but not sufficient to describe dissipation but is not captured by
$d\widetilde{W}_{k}=d_{\text{e}}E_{k}$. Therefore, assuming $d\widetilde
{W}_{k}=dE_{k}$ is to ignore dissipation completely regardless of the time
dependence of the work protocol. The SI-work relation $dW_{k}=-dE_{k}$, which
will be established here, contains dissipation and will provide the required
NEQ work relation.
\end{proposition}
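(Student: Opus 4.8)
The plan is to decompose the infinitesimal change $dE_k$ along the thermodynamic process into its exchange and internal parts, following the Prigogine partition already introduced in Eq.~(\ref{E_k-partition}), and to show that the external-work term $d\widetilde{W}_k$ can only account for $d_{\text{e}}E_k$, leaving $d_{\text{i}}E_k$ as the sole carrier of dissipation. First I would write, for a microstate $\mathfrak{m}_k$ with energy $E_k=\mathcal{H}(V,\xi)$ at fixed $\mathbf{z}$, the total differential along the actual process, $dE_k=(\partial\mathcal{H}/\partial V)\,dV+(\partial\mathcal{H}/\partial\xi)\,d\xi$. The first term is the work done against the externally controlled parameter $V$; when the system field $P(t)$ differs from the medium field $P_0$ (the generic NEQ situation, and even a reversibly driven one as stressed before Sec.~\ref{Sec-ForceImbalance-Thermodynamic}), the work delivered by the medium is $d\widetilde{W}_k=P_0\,dV$, whereas the energy drop of the microstate is governed by its own generalized force. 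The mismatch between these, together with the $\xi$-term, is exactly $d_{\text{i}}E_k$. I would make this identification precise: $d_{\text{e}}E_k\equiv d\widetilde{W}_k$ by definition (Eq.~(\ref{E_k-partition})), and therefore $d_{\text{i}}E_k=dE_k-d\widetilde{W}_k$ is forced to be nonzero whenever the internal and medium forces disagree or $\xi$ evolves.

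Next I would argue the three assertions of the Proposition in turn. (i) \emph{Necessity of $d_{\text{i}}E_k$ for dissipation but insufficiency}: the dissipated work at the microstate level is $d_{\text{i}}W_k=d\widetilde{W}_k+dW_k=d\widetilde{W}_k-dE_k=-d_{\text{i}}E_k$, using the SI relation $dW_k=-dE_k$. Averaging over $\{p_k\}$, $\langle d_{\text{i}}W_k\rangle$ reproduces the macroscopic dissipated work, which enters the irreversible entropy production $d_{\text{i}}S=\langle d_{\text{i}}W_k\rangle/T$ (with $T$ the system temperature in the IEQ description of Sec.~\ref{Sec-Theory}); since $d_{\text{i}}S\ge 0$ is generically strict off EQ, $d_{\text{i}}E_k$ cannot vanish identically, establishing necessity. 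Insufficiency follows because $d_{\text{i}}E_k$ at the level of a single microstate may have either sign (a purely mechanical rearrangement can shuffle energy internally without net dissipation); it is only the probability-weighted average, constrained by the second law, that yields $d_{\text{i}}S\ge 0$ — so knowing $d_{\text{i}}E_k\ne 0$ alone does not give dissipation, one also needs the averaging and the sign constraint. (ii) \emph{Assuming $d\widetilde{W}_k=dE_k$ ignores dissipation}: under that assumption, $d_{\text{i}}E_k\equiv 0$ identically, hence $d_{\text{i}}W_k\equiv 0$ and $d_{\text{i}}S\equiv 0$, regardless of how fast or slow the protocol $V(t)$ is driven; this is precisely the content of Eq.~(\ref{MistakenIdentity}) being untenable, and I would point to the mechanical and thermodynamic examples (Secs.~\ref{Sec-Example-MechSystem}, \ref{Sec-ForceImbalance-Thermodynamic}) where $d_{\text{i}}E_k\ne0$ persists even reversibly, so the mistaken identity fails even there. (iii) \emph{$dW_k=-dE_k$ contains dissipation and gives the NEQ work relation}: since $dW_k=-dE_k=-(d_{\text{e}}E_k+d_{\text{i}}E_k)=-d\widetilde{W}_k+d_{\text{i}}W_k$, the SI-work automatically carries the dissipative piece that $d\widetilde{W}_k$ omits, so it is the correct quantity to feed into a first-law-based work relation; the fact that $\Delta E_k=-\Delta W_k$ depends only on the endpoint energies then follows because $E_k=\mathcal{H}(V,\xi)$ is a state function of the microstate parameters, giving the new work relation advertised in the abstract.

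The main obstacle I anticipate is making the claim "$d_{\text{i}}E_k\ne 0$ is ubiquitous, even reversibly" rigorous rather than merely illustrative: one must argue that the generalized force internal to the microstate genuinely differs from the medium's applied force in essentially all realistic setups — including quasi-static ones — which requires a careful treatment of the force imbalance across the piston/spring arrangement of Fig.~(\ref{Fig_Piston-Spring}) and of the role of the internal variable $\xi$ as a reservoir of "hidden" mechanical energy. This is where I would lean most heavily on the explicit examples in Secs.~\ref{Sec-Example-MechSystem} and \ref{Sec-ForceImbalance-Thermodynamic}, since a fully general proof that $d_{\text{i}}E_k$ never vanishes would overreach; the honest statement is that it is nonzero whenever there is any force imbalance or any internal relaxation, and that these conditions are the norm, not the exception — which is exactly what the Proposition asserts.
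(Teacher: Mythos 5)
Your proposal is correct and follows essentially the same route as the paper: the Prigogine partition $dE_{k}=d_{\text{e}}E_{k}+d_{\text{i}}E_{k}$ with $d_{\text{e}}E_{k}\equiv d\widetilde{W}_{k}$, the SI identification $dW_{k}=-dE_{k}$ so that $d_{\text{i}}W_{k}=-d_{\text{i}}E_{k}$ is the dissipative piece whose average enters the entropy production, insufficiency via the possible vanishing of the probability-weighted average of the force imbalance (the paper's Eq.~(\ref{EQ-condition})), and ubiquity argued from the mechanical/thermodynamic examples and the pressure-fluctuation relation Eq.~(\ref{PressureFluctuations}). One minor imprecision: $d_{\text{i}}W_{k}\equiv0$ eliminates only the dissipated-work part of $d_{\text{i}}S$, not the heat-exchange term in Eq.~(\ref{Irreversible EntropyGeneration-Complete}), but since the paper explicitly reserves ``dissipation'' for the dissipated work, your conclusion stands.
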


The same conclusion also applies to the accumulation $\Delta_{\text{i}}%
E_{k}\neq0$ along a trajectory, which is the path $\gamma$ taken by the $k$th
microstate $\mathfrak{m}_{k}$ during a NEQ process $\mathcal{P}_{0}$ over a
time interval $\left(  0,\tau\right)  $. The energy changes $dE_{k}%
,d_{\text{e}}E_{k}$ and $d_{\text{i}}E_{k}$ occur during a segment of
$\mathcal{P}_{0}$ between $t$\ and $t+dt$, while $\Delta E_{k},\Delta
_{\text{e}}E_{k}$ and $\Delta_{\text{i}}E_{k}$ are cumulative changes over the
entire trajectory $\gamma$; see Sec.\ref{Sec-MicrostateThermodynamics} for
more details.

\textbf{Quantum and Stochastic Thermodynamics}: In a seminal paper laying down
the foundation of quantum thermodynamics, Alicki \cite[cf. Eq. (2.4)
there]{Alicki} identifies the average work $d\widetilde{W}$, the average of
$d\widetilde{W}_{k}$, done by external forces with the average of $dE_{k}$ in
accordance with Eq. (\ref{MistakenIdentity}). In a paper of similar importance
for stochastic thermodynamics, Sekimoto \cite[cf. Eqs. (2.7)-(2.9)
there]{Sekimoto} also identifies the change $dE_{k}$ in the energy $E_{k}$ by
work variables with the work $d\widetilde{W}_{k}$ done by the external system.

\textbf{What is Accomplished}: The recent report
\cite{Gujrati-GeneralizedWork} was very compact. This work is designed to
elaborate on what was reported, to provide the missing details and more.
Throughout the work, we include an internal varaiable as a work variable. The
main emphasis here will be to demonstrate the ubiquitous nature of
$d_{\text{i}}E_{k}$, whose existence has not been previously appreciated. Not
recognizing its existence has resulted in the above mistaken identification.
It is the force imbalance between the internal and external forces that
generates $d_{\text{i}}E_{k}$ and is very common in almost processes, even a
reversible one as we will demonstrate. This is the most important outcome of
the our approach, which emphasizes the importance of SI-quantities (such as
$dE_{k}=-dW_{k}$) that are very different from the MI-quantities (such as
$d_{\text{e}}E_{k}=d\widetilde{W}_{k}$) in any\ process, even a reversible
one. The use of generalized work as isentropic change $dW=-dE_{S}$ allows us
to calculate microscopic work $dW_{k}$, which changes $dE_{k\text{ }}$but not
$p_{k}$.\ On the other hand, the generalized heat $dQ_{k}$ does not change
$dE_{k}$ but changes $p_{k}$. This proves to be a great simplification and
allows us to treat $dW_{k}$ and $dQ_{k}$ as purely a mechanical and a
stochastic concept, respectively. In addition, as $\mathfrak{m}_{k}$ uniquely
determines $E_{k}$ for a fixed work set $\mathbf{W}$, we find that $dE_{k}$ is
uniquely determined by $d\mathbf{W}$ and not by the nature of the work
process. This provides a simplification in evaluating the cumulative change
$\Delta W_{k}$ and allows us to prove a new work theorem, both for the
exclusive and the inclusive Hamiltonian. This work theorem is very general and
we apply it successfully to the free expansion.

\subsection{Layout}

The layout of the paper is the following. In the next section, we introduce
the extension of Prigogine's notation as it may be unfamiliar to many readers.
In Sec. \ref{Sec-NewConcepts}, three important concepts of NEQ quantities that
will play a central role in this work: the concepts of SI- and MI-quantities,
and that of microwork. In Sec. \ref{Sec-FirstLaw-ThermodynamicForces}, we give
two alternate versions of the first law using the MI- and SI-quantities, and
introduce thermodynamic forces and their role in dissipation that does not
seem to have been appreciated in the field. A new concept of force imbalance
is also introduced that plays an important role in microscopic thermodynamics
but has been hitherto overlooked. It is one of the central new quantities that
is ubiquitous and needs to be incorporated in any process, EQ or NEQ. In Sec.
\ref{Sec-MicrostateThermodynamics}, we introduce a new approach to investigate
NEQ thermodynamics at the microscopic level by using microstates. Here, we
introduce a different way of looking at work and heat, which we call
generalized work and generalized heat, that has only recently been developed
but proves very useful for a microscopic understanding of statistical
thermodynamics. The generalized work is isentropic and the generalized heat
satisfies the Clausius equality $dQ=TdS$ even for an irreversible process.
This makes generalized work a purely mechanical quantity. The stochasicity
arises from heat but not from work. This results in a clear division between
work and heat, making them independent quantities as we explain here. In the
next section, we introduce exclusive and inclusive Hamiltonians. It is the
latter Hamiltonian that is used by Jarzynski, but we find basically no formal
difference between the two. In Sec. \ref{Sec-Examples}, we consider two
different models, a mechanical and a thermodynamic, to explain various
different kinds of NEQ work and heat. In Sec. \ref{Sec-Friction}, we show how
internal variables emerge in a system's Hamiltonian. We particularly pay
attention to the relative velocity that induces a force that is dissipative,
\textit{i.e., }frictional and contributes to the dissipative work. In Sec.
\ref{Sec-Theory}, we finally develop our theoretical framework for NEQ
statistical thermodynamics and introduce statistical formulation of various
quantities. In Sec. \ref{Sec-WorkTheorem}, we propose a new work theorem both
for an exclusive and an inclusive Hamiltonian, which differs from the
Jarzynski formulation. The latter is well known for its failure for free
expansion. Therefore, we apply the new theorem to free expansion and show that
it holds in Sec. \ref{Sec-FreeExpansion}. Indeed, we argue that the new
theorem works for any arbitrary process and not just the free expansion. In
the last section, we give a brief summary of our results.

\section{Generalization of Prigogine's Partition\label{Sec-Notation}}

Any extensive SI quantity $X(t)$ of $\Sigma$, see Fig. \ref{Fig_System}, can
undergo two distinct kinds of changes in time: one due to the exchange with
the medium and another one due to internal processes. Following modern
notation \cite{Prigogine,deGroot}, exchanges of $X$ (we will suppress the time
argument unless clarity is needed) with the medium and changes within the
system carry the suffix e and i, respectively:%
\begin{equation}
dX\equiv d_{\text{e}}X+d_{\text{i}}X; \label{X-partition}%
\end{equation}
here
\[
dX(t)\doteq X(t+dt)-X(t).
\]
For the medium, we must replace $X$ by $\widetilde{X}$ so that $d\widetilde
{X}(t)=\widetilde{X}(t+dt)-\widetilde{X}(t)$. For $\Sigma_{0}$, we must use
$X_{0}$ so that $dX_{0}(t)\doteq X_{0}(t+dt)-X_{0}(t)$. We will assume
additivity of $X$ for $\Sigma_{0}$:\
\[
X_{0}=X(t)+\widetilde{X}(t).
\]
For this to hold, we need to assume that $\Sigma$ and $\widetilde{\Sigma}$
interact so \emph{weakly} that their interactions can be neglected; recall
that $E$ is one of the possible $X$. As there is no irreversibility within
$\widetilde{\Sigma}$ , we must have $d_{\text{i}}\widetilde{X}(t)=0$ for any
medium quantity $\widetilde{X}(t)$ and
\begin{equation}
d_{\text{e}}X\doteq-d\widetilde{X}=-d_{\text{e}}\widetilde{X}.
\label{exchange-equality}%
\end{equation}
It follows from additivity for $\Sigma_{0}$\ that
\begin{equation}
dX_{0}\equiv dX+d\widetilde{X}=d_{\text{i}}X.
\label{Isolated-system-irreversibility}%
\end{equation}
This means that any irreversibility in $\Sigma_{0}$ is ascribed to $\Sigma$,
and not to $\widetilde{\Sigma}$, as noted above.\ In a reversible change,
$d_{\text{i}}X\equiv0$. The additivity of entropy%
\[
S_{0}(t)=S(t)+\widetilde{S}(t)
\]
requires $\Sigma$ and $\widetilde{\Sigma}$ \ to be quasi-independent as said
earlier so that
\[
dS_{0}=dS+d\widetilde{S}.
\]

As an example of the Prigogine's notation, the entropy change $dS$ is written
as
\[
dS\equiv d_{\text{e}}S+d_{\text{i}}S
\]
for $\Sigma$; here,
\[
d_{\text{e}}S=-d_{\text{e}}\widetilde{S}%
\]
is the entropy exchange with the medium and
\begin{equation}
d_{\text{i}}S\geq0 \label{SecondLaw}%
\end{equation}
is \emph{irreversible entropy generation} due to internal processes within
$\Sigma$ and is nonnegative in accordance with the second law. It follows from
Eq. (\ref{Isolated-system-irreversibility}) that $d_{\text{i}}S$ is also the
entropy change $dS_{0}$\ of $\Sigma_{0}$. For the energy change $dE$, we
write
\begin{equation}
dE\equiv d_{\text{e}}E+d_{\text{i}}E, \label{E-Partition}%
\end{equation}
compare with Eq. (\ref{E_k-partition}), except that
\begin{equation}
d_{\text{i}}E\equiv0, \label{diE-EQ}%
\end{equation}
since no internal process, even chemical reaction, can change the energy of
the system. [The surprising fact is that $d_{\text{i}}E_{k}\neq0$, as we will
establish below; see Eq. (\ref{Works-Microstates-Internal}).] Similarly, if
$dW$ and $dQ$ represent the SI work done by and the heat change of the system,
then
\[
dW\equiv d_{\text{e}}W+d_{\text{i}}W,dQ\equiv d_{\text{e}}Q+d_{\text{i}}Q.
\]
Here, $d_{\text{e}}W$ and $d_{\text{e}}Q$ are the \emph{work exchange} and
\emph{heat exchange }with the medium, respectively, and $d_{\text{i}}W\equiv
d_{\text{i}}W_{0}\ $and $d_{\text{i}}Q\equiv$ $d_{\text{i}}Q_{0}$ are
\emph{irreversible} work done and heat generation due to internal processes in
$\Sigma$. For an isolated system such as $\Sigma_{0}$, the exchange quantity
vanishes so that $dX_{0}=d_{\text{i}}X_{0}$. In particular,
\begin{equation}
dS_{0}=d_{\text{i}}S_{0};dE_{0}=d_{\text{i}}E_{0}\equiv0,dW_{0}=d_{\text{i}%
}W_{0};dQ_{0}=d_{\text{i}}Q_{0}. \label{Isolated-system-irreversibility-W-Q}%
\end{equation}
Thus, a description in terms of SI quantity alone can describe an isolated and
an open system, whether in equilibrium or not. This makes the SI description
highly desirable.

The use of exchange quantities results in an MI-description for $\Sigma$. No
internal variables are required to determine the exchange quantities since
their affinities for $\widetilde{\Sigma}$ vanish \cite{deGroot,Prigogine},
which is always assumed to be in equilibrium. The SI-description always refers
to SI quantities ($dW$,$dQ$, etc.). The SI quantities may include internal
variables as their affinities need not vanish for $\Sigma$; in many cases,
they may not be readily measurable or even identifiable and require care in
interpreting results \cite{Maugin}. Therefore, the use of exchange quantities
(MI) is quite widespread. Despite this, the SI-description, which we call the
\emph{internal approach}, is more appropriate to study nonequilibrium
processes at the microscopic level, even if we do not use $\xi$, since
$\mathcal{H}$, itself an SI quantity, plays a central role. In contrast, the
MI-description is called the \emph{external approach} here.

\textbf{Intuitive Meaning}: Intuitively, $dW$ denotes the work done by the
system, a part $d_{\text{e}}W$ ($=P_{0}dV$) of which is transferred to
$\widetilde{\Sigma}_{\text{w}}$ through exchange and $d_{\text{i}}W$ is spent
to overcome \emph{internal} dissipative forces. Of $dQ$, a part $d_{\text{e}%
}Q$ is transferred from $\widetilde{\Sigma}_{\text{h}}$ through exchange and
$d_{\text{i}}Q=d_{\text{i}}W$ is generated by \emph{internal} dissipative forces.

The modern notation requires three different operators $d,d_{\text{e}}$ and
$d_{\text{i}}$ having the \emph{additive property}%
\begin{equation}
d\equiv d_{\text{e}}+d_{\text{i}}, \label{Differentiation-Additive}%
\end{equation}
acting on any extensive quantity. They are linear differential operators,
which satisfy%
\begin{align*}
d_{\alpha}(aX+bX^{\prime})  &  =ad_{\alpha}X+bd_{\alpha}X^{\prime},\\
d_{\alpha}(XX^{\prime})  &  =Xd_{\alpha}X^{\prime}+(d_{\alpha}X)X^{\prime};
\end{align*}
here $d_{\alpha}$ stands for $d,d_{\text{e}}$ and $d_{\text{i}}$, $X$ and
$X^{\prime}$ are two extensive quantities, and $a$ and $b$ are two pure
numbers. These operators also work on microstate quantities as well and will
prove very useful. We have already seen its use for $E_{k}$ in Eq.
(\ref{E_k-partition}).

Treating microstate probabilities $\left\{  p_{k}\right\}  $ as fractions
$\mathcal{N}_{k}/\mathcal{N}$ of the number of replicas in the $k$th
microstate out of a large but fixed number $\mathcal{N}$ of replicas of the
same system, we can also apply $d_{\alpha}$ on $p_{k}$ with the result%
\begin{equation}
dp_{k}=d_{\text{e}}p_{k}+d_{\text{i}}p_{k}, \label{Probability-partition}%
\end{equation}
that will be very useful later.

\section{New\ Concepts\label{Sec-NewConcepts}}

\subsection{SI- and MI-works}

In a NEQ state, the thermodynamics of the medium and the system are very
different. As pointed above, it is $\Sigma$\ that experiences dissipation
\cite{Fermi,Kestin,Woods}, which is partly responsible for the irreversible
entropy generation $d_{\text{i}}S$ in the system as we will see in Eq.
(\ref{Irreversible EntropyGeneration-Complete}). To clarify the difference
between the SI-description and the MI-description, we consider the
pressure-volume work and the work done by some internal variable $\xi$ as an
example; see the gas confined in a cylinder with one end closed and a movable
piston on the other end in Fig. \ref{Fig_Piston-Spring}(a). The internal
variable may describe the nonuniformity of the gas in an irreversible process.
It should be recalled that a system in EQ must be uniform
\cite{Gujrati-II,Landau} so any nonuniformity will result in a NEQ state. The
work in the two descriptions is the
\begin{equation}
\text{SI-work }dW=PdV+Ad\xi\label{SI-Work}%
\end{equation}
done \emph{by} $\Sigma$ (SI)\ or the%
\begin{equation}
\text{MI-work}\emph{\ }d\widetilde{W}=-P_{0}dV \label{MI-Work}%
\end{equation}
done \emph{by} $\widetilde{\Sigma}$ (MI) in terms of the instantaneous
pressure $P$ of $\Sigma$ or $P_{0}~$of $\widetilde{\Sigma}$, and their volume
change $dV$ or $-dV$, respectively, where we have used the additivity of the
volume $V_{0}=V+\widetilde{V}$ and the fact that $d\widetilde{V}=-dV$ to
ensure a constant $V_{0}$. Notice that $d\widetilde{W}$ does not contain any
contribution from the internal variable since $A_{0}=0$
\cite{deGroot,Prigogine}. It represents the MI-work done on the system by
$\widetilde{\Sigma}$ and its negative denotes the exchange work between
$\widetilde{\Sigma}$ and $\Sigma$ as the MI-\emph{work done by }$\Sigma$\emph{
against }$\widetilde{\Sigma}$; we denote it by
\[
d_{\text{e}}W=-d\widetilde{W}=P_{0}dV.
\]%
%TCIMACRO{\FRAME{ftbpFU}{3.5492in}{1.8273in}{0pt}{\Qcb{We schematically show a
%system of (a) gas in a cylinder with a movable piston under an external
%pressure $P_{0\text{ }}$controlling the volume $V$ of the gas, and (b) a
%particle attached to a spring in a fluid being pulled by an external force
%$F_{0}$, which causes the spring to stretch or compress depending on its
%direction. In an irreversible process, the internal pressure $P$ (the spring
%force $F_{\text{s}}$) is different in magnitude from the external pressure
%$P_{0}$ (external force $F_{0}$).}}{\Qlb{Fig_Piston-Spring}}%
%{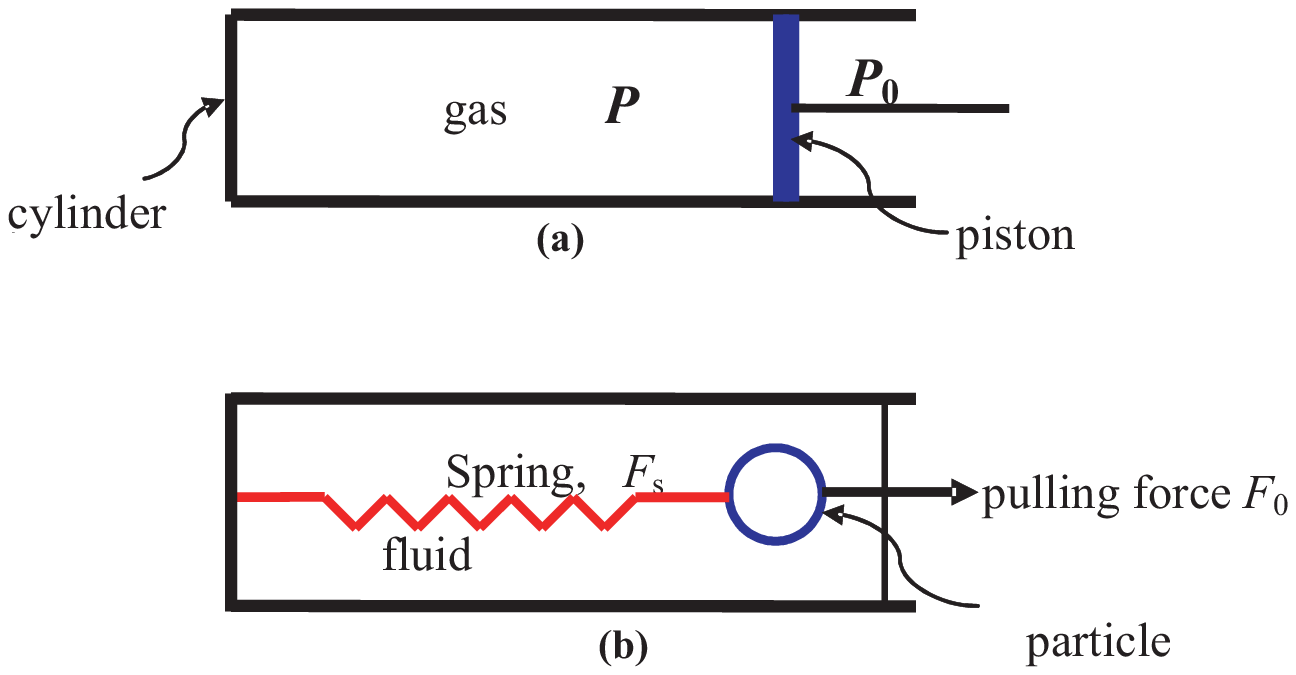}{\special{ language "Scientific Word";
%type "GRAPHIC";  maintain-aspect-ratio TRUE;  display "USEDEF";
%valid_file "F";  width 3.5492in;  height 1.8273in;  depth 0pt;
%original-width 5.418in;  original-height 2.7691in;  cropleft "0";
%croptop "1";  cropright "1";  cropbottom "0";
%filename 'Piston-Spring-Mod.eps';file-properties "XNPEU";}} }%
%BeginExpansion
\begin{figure}
[ptb]
\begin{center}
\includegraphics[
height=1.8273in,
width=3.5492in
]%
{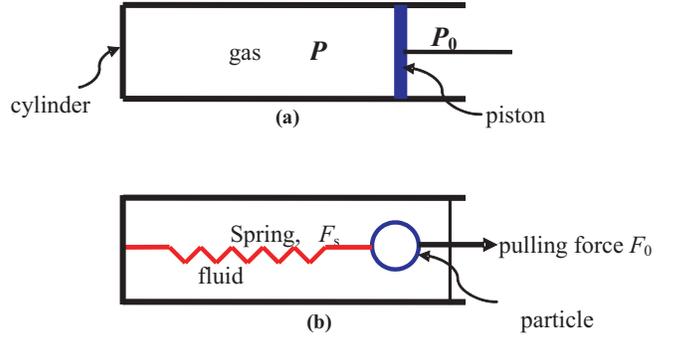}%
\caption{We schematically show a system of (a) gas in a cylinder with a
movable piston under an external pressure $P_{0\text{ }}$controlling the
volume $V$ of the gas, and (b) a particle attached to a spring in a fluid
being pulled by an external force $F_{0}$, which causes the spring to stretch
or compress depending on its direction. In an irreversible process, the
internal pressure $P$ (the spring force $F_{\text{s}}$) is different in
magnitude from the external pressure $P_{0}$ (external force $F_{0}$).}%
\label{Fig_Piston-Spring}%
\end{center}
\end{figure}
%EndExpansion

Let $d\widetilde{Q}$ denote the heat added to the medium $\widetilde{\Sigma}$,
here by the system. Then, $d_{\text{e}}Q=-d\widetilde{Q}$ is the heat
transferred (exchanged) to $\Sigma$ by $\widetilde{\Sigma}$ and $d_{\text{e}%
}W=-d\widetilde{W}$ is the work transferred (exchanged) by $\Sigma$ to
$\widetilde{\Sigma}$. Both $d_{\text{e}}Q$ and $d_{\text{e}}W$ refer to energy
flows across the boundary between $\Sigma$ and $\widetilde{\Sigma}$ (see the
exchange $d_{\text{e}}X$ in Fig. \ref{Fig_System}), and are determined by the
process of transfer that depends on the combined properties of $\Sigma$ and
$\widetilde{\Sigma}$. They are customarily used to express the first law for
$\Sigma$
\begin{equation}
dE=d_{\text{e}}Q-d_{\text{e}}W \label{FirstLaw-MI}%
\end{equation}
Our sign convention is that $d_{\text{e}}Q$ is positive when it is added to
$\Sigma$, and $d_{\text{e}}W$ is positive when it is transferred to
$\widetilde{\Sigma}$. Once, work has been identified, the use of the first law
allows us to uniquely determine the heat.

\subsection{Microwork}

The above macroscopic concepts of work represent thermodynamic averages (this
will become more clear later) \cite{Landau,Kestin,Woods,Gibbs}. This allows us
to identify the microscopic analogs $dW_{k}$ and $d\widetilde{W}_{k}$\ of $dW$
and $d\widetilde{W}$, respectively, at the level of $k$th microstates
$\mathfrak{m}_{k}$. The thermodynamic average connection, see also Sec.
\ref{Sec-MicrostateThermodynamics}, is the following:%
\begin{equation}
dW=\left\langle dW\right\rangle \doteq%
%TCIMACRO{\tsum \nolimits_{k}}%
%BeginExpansion
{\textstyle\sum\nolimits_{k}}
%EndExpansion
p_{k}dW_{k},d\widetilde{W}=\left\langle d\widetilde{W}\right\rangle \doteq%
%TCIMACRO{\tsum \nolimits_{k}}%
%BeginExpansion
{\textstyle\sum\nolimits_{k}}
%EndExpansion
p_{k}d\widetilde{W}_{k}, \label{AverageWorks}%
\end{equation}
which also introduces thermodynamic averaging $\left\langle \cdot\right\rangle
$ with respect to the instantaneous microstate probabilities $\left\{
p_{k}\right\}  $. For the energy change $dE_{k}$, we have its average given by%
\begin{equation}
dE_{S}=\left\langle dE\right\rangle \doteq%
%TCIMACRO{\tsum \nolimits_{k}}%
%BeginExpansion
{\textstyle\sum\nolimits_{k}}
%EndExpansion
p_{k}dE_{k}; \label{dE_isentropic}%
\end{equation}
the reason for the special symbol becomes once we recall the definition of
entropy $S$:%
\begin{equation}
S=-\left\langle \ln p\right\rangle \doteq-%
%TCIMACRO{\tsum \nolimits_{k}}%
%BeginExpansion
{\textstyle\sum\nolimits_{k}}
%EndExpansion
p_{k}\ln p_{k}. \label{Entropy}%
\end{equation}
As the above three averages are at fixed $p_{k}$'s, they are evaluated at
fixed entropy. This is shown by the suffix $S$ in $dE_{S}$. This quantity will
turn out to have an important role in our theory.

We will refer to $dW_{k}$ in the following as \emph{microstate work} or
\emph{microwork} in short (but not $d\widetilde{W}_{k}$). One can then
determine the accumulated work $\Delta W_{k},\Delta\widetilde{W}_{k}$\ and
$\Delta E_{k}$ along some path $\gamma$ followed by the microstate in some
process $\mathcal{P}_{0}$ by simply integrating $dW_{k}$ but without the
probabilities:%
\begin{equation}
\Delta W_{k}\doteq%
%TCIMACRO{\tint \nolimits_{\gamma}}%
%BeginExpansion
{\textstyle\int\nolimits_{\gamma}}
%EndExpansion
dW_{k},\Delta\widetilde{W}_{k}\doteq%
%TCIMACRO{\tint \nolimits_{\gamma}}%
%BeginExpansion
{\textstyle\int\nolimits_{\gamma}}
%EndExpansion
d\widetilde{W}_{k},\Delta E_{k}\doteq%
%TCIMACRO{\tint \nolimits_{\gamma}}%
%BeginExpansion
{\textstyle\int\nolimits_{\gamma}}
%EndExpansion
dE_{k}. \label{CumulativeWorks}%
\end{equation}
We observe that $dW_{k},d\widetilde{W}_{k}$ and $dE_{k}$\ are defined
regardless of the probability $p_{k}$. Because of this, we wish to emphasize
the following important point, which we present as a

\begin{condition}
\label{Cond-MicroWork_prob}The cumulative quantities $\Delta W_{k}%
,\Delta\widetilde{W}_{k}$ or $\Delta E_{k}$ exist or are defined regardless of
$p_{k}$. Thus, they exist or defined even if $p_{k}=0$.
\end{condition}

This condition will prove useful when we consider free expansion of a
classical gas in Sec. \ref{Sec-FreeExpansion}.

Traditional formulation of NEQ statistical mechanics and thermodynamics
\cite{Landau,Gibbs} takes a mechanical approach in which $\Sigma$ follows its
classical or quantum mechanical evolution dictated by its SI-Hamiltonian
$\mathcal{H}(V,\xi)$ \cite{Note-xi-dependence}; the interaction with
$\widetilde{\Sigma}$ is usually treated as a very weak stochastic
perturbation. However, investigating microwork (SI approach) is very
convenient as $p_{k}$'s do not change. Unfortunately, this description has
been overlooked by the current practitioners in the field who have
consistently used a MI-description for work by using the external approach but
then mistakenly confuse it with its SI-analog; see Eq. (\ref{MistakenIdentity}).

\subsection{Jarzynski Process\label{Sec-JarzynskiProcess}}

As $\Delta\widetilde{W}_{k}$ is not affected by the internal state ($P,A$, and
$\left\{  p_{k}\right\}  $) of $\Sigma$, it is a choice candidate for studying
work fluctuations and for the Jarzynski process. By setting the MI-works
$d\widetilde{W}_{k},\Delta\widetilde{W}_{k}$ equal to the SI-changes
$dE_{k},\Delta E_{k}$, respectively,
\begin{equation}
d\widetilde{W}_{k}=dE_{k},\Delta\widetilde{W}_{k}=\Delta E_{k},
\label{JarzynskiWorkEnergy}%
\end{equation}
[compare with Eq. (\ref{MistakenIdentity})] Jarzynski developed the following
famous nonequilibrium work relation, the JE, using a very clever averaging
$\left\langle \cdot\right\rangle _{0}$
\begin{equation}
\left\langle e^{-\beta_{0}\Delta\widetilde{W^{\prime}}}\right\rangle
_{0}\doteq%
%TCIMACRO{\tsum \nolimits_{k}}%
%BeginExpansion
{\textstyle\sum\nolimits_{k}}
%EndExpansion
p_{k0}e^{-\beta_{0}\Delta\widetilde{W^{\prime}}_{k}}=e^{-\beta_{0}\Delta
F^{\prime}}~ \label{JarzynskiRelation}%
\end{equation}
for a Jarzynski process (we use a prime now reserved for the inclusive
approach discussed in Sec. \ref{Sec-Exclusive-Inclusive-Hamiltonians}). In Eq.
(\ref{JarzynskiRelation}), where an externally driven nonequilibrium process
$\mathcal{P}_{0}$ between two equilibrium states, the initial state
\textsf{A}$_{\text{eq}}$ and the final state \textsf{B}$_{\text{eq}}$, at the
same inverse temperature $\beta_{0}$ is considered, $\left\langle
\cdot\right\rangle _{0}$ refers to a special averaging with respect to the
canonical probabilities of the initial state \textsf{A}$_{\text{eq}}$:
\[
p_{k0}^{\prime}\doteq e^{-\beta_{0}E_{k}^{\prime}}/Z_{\text{in}}^{\prime
}(\beta_{0});
\]
here, the suffix $0$ refers to the initial state \textsf{A}$_{\text{eq}}$,
$Z_{\text{in}}^{\prime}(\beta_{0})$ the initial equilibrium partition function
for the system, $\Delta\widetilde{W^{\prime}}_{k}$ the cumulative work done on
$\mathfrak{m}_{k}$ during $\mathcal{P}_{0}$, and $\Delta F^{\prime}$ the
change in the free energy over $\mathcal{P}_{0}$
\cite{Gujrati-jarzynski-SecondLaw}. However, the average $\left\langle
\cdot\right\rangle _{0}$ in Eq. (\ref{JarzynskiRelation}) does not represent a
thermodynamic average [compare with Eq. (\ref{Av-ExternalWork-Medium})] over
$\mathcal{P}_{0}$ as recently pointed out \cite{Gujrati-jarzynski-SecondLaw}.
The inverse temperature along $\mathcal{P}_{0}$ may not always exist or may be
different than $\beta_{0}$ due to irreversibility \cite{Cohen,Muschik}. The
process $\mathcal{P}_{0}$ consists of two stages: in the \emph{driving stage}
$\mathcal{P}$, external work is done ($\Delta\widetilde{W}_{k}^{\prime}\neq
0$), and the \emph{reequilibration stage} $\overline{\mathcal{P}}%
\doteq\mathcal{P}_{0}\backslash\mathcal{P}$ ($\mathcal{P}_{0}=\mathcal{P\cup
}\overline{\mathcal{P}}$) in which $\Sigma$ is in thermal contact with the
heat bath $\widetilde{\Sigma}_{\text{h}}$ at $\beta_{0}$, during which no work
is done ($\Delta\widetilde{W}_{k}^{\prime}\equiv0$); $\widetilde{\Sigma
}_{\text{h}}$ may or may not be present during $\mathcal{P}$. The equality is
not obeyed for free expansion of a classical gas for which $\Delta
\widetilde{W^{\prime}}_{k}\equiv0$ but $\Delta F^{\prime}\neq0$ as previously
observed \cite{Sung,Gross,Jarzynski-Gross}.

\section{The First and Second Laws, Thermodynamic Forces and
Dissipation\label{Sec-FirstLaw-ThermodynamicForces}}

\subsection{Gibbs Fundamental Relation\label{Sec-GibbsRelation}}

For the IEQ entropy $S(E,V,\xi)$, the Gibbs fundamental relation \cite{Gibbs}
is given by%
\begin{equation}
dS=(dE+PdV+Ad\xi)/T, \label{Gibbs-S-IEQ}%
\end{equation}
where
\[
1/T=\partial S/\partial E,P=T\partial S/\partial V,A=T\partial S/\partial\xi
\]
are the inverse temperature, pressure and the affinity of the system that
represent its fields. As $S$ is a SI-quantity, the above fields are also
SI-quantities. In EQ, the affinity vanishes, which means that $S$ no longer
depends on $\xi$. In other words, $\xi$ is no longer independent of $E$ and
$V$ in EQ so we can write it as $\xi_{\text{eq}}\equiv\xi(E,V)$. Otherwise,
$\xi$ is an independent variable in NEQ states. We can rewrite Eq.
(\ref{Gibbs-S-IEQ}) as%
\begin{equation}
dE=TdS-PdV-Ad\xi. \label{Gibbs-E-IEQ}%
\end{equation}
We see that the first term on the right side of the equation represents the
change in the energy due to the entropy, while the last two terms represent
the \emph{isentropic} change $dE_{S}$ in the energy; see Eq.
(\ref{dE_isentropic}). These changes represent SI-changes in $E$ and play an
important role in our approach. In particular, as $V$ and $\xi$ are work
variables that we collectively denote by $\mathbf{W}$, we see from Eq.
(\ref{SI-Work}) that $\left.  dE\right\vert _{S}$ is nothing but the SI-work
$dW$. Thus,%
\begin{equation}
TdS=dE-dE_{S}\equiv dE_{\mathbf{W}}\text{,} \label{IsometricChange}%
\end{equation}
where $dE_{\mathbf{W}}$ represents the change $dE$ at fixed $\mathbf{W}$. For
brevity, we will call $dE_{\mathbf{W}}$ the \emph{isometric} change in which
all work variables are held fixed. Thus, we arrive at the following conclusion:

\begin{conclusion}
\label{Conc-Isentropic-Isometric-Changes} The change $dE$ consists of two
independent contributions- an isentropic change $dW=dE_{S}$, and an isometric
change $TdS=dE_{\mathbf{W}}$.
\end{conclusion}

The conclusion allows us to express the first law in an alternative form using
SI-quantities as we discuss below.

\subsection{The First Law\label{Sec-FirstLaw}}

Using the exchange work $d_{\text{e}}W=-d\widetilde{W}=P_{0}dV$, the (MI)
\emph{work done by }$\Sigma$\emph{ against }$\widetilde{\Sigma}$, and the work
partition%
\begin{equation}
dW=d_{\text{e}}W+d_{\text{i}}W, \label{GeneralizedWork}%
\end{equation}
we identify $d_{\text{i}}W$. We also introduce the exchange heat $d_{\text{e}%
}Q=-d\widetilde{Q}=T_{0}d_{\text{e}}S$ between $\widetilde{\Sigma}$ and
$\Sigma$, which represents a MI-quantity. Defining
\begin{equation}
d_{\text{i}}Q\doteq d_{\text{i}}W \label{diQ-diW-EQ}%
\end{equation}
to emphasize the well-known fact that internal work can also be treated as
internal heat, we introduce%
\begin{equation}
dQ=d_{\text{e}}Q+d_{\text{i}}Q \label{GeneralizedHeat}%
\end{equation}
so that the conventional MI-form of the first law in Eq. (\ref{FirstLaw-MI})
can also be written in an SI-form
\begin{equation}
dE=dQ-dW. \label{FirstLaw-SI}%
\end{equation}
From now on, we will refer to $dW$ and $dQ$ ($d_{\text{e}}W$ and $d_{\text{e}%
}Q$) as \emph{generalized work} and \emph{generalized heat} (exchange work and
exchange heat), respectively, so that no confusion can arise.

If we compare this form with Eq. (\ref{Gibbs-E-IEQ}), we recognize that the
last two terms there is $dW$ above. Hence, the first term $TdS$ there is
nothing but $dQ$ introduced above. This provides us with an important
consequence of our approach
\cite{Gujrati-Heat-Work0,Gujrati-Heat-Work,Gujrati-I,Gujrati-II,Gujrati-III,Gujrati-Entropy1,Gujrati-Entropy2}
in which
\begin{equation}
dQ=TdS; \label{ClausiusEquality}%
\end{equation}
compare with Eq. (\ref{IsometricChange}). In other words, $dQ$ is nothing but
the isometric change $\left.  dE\right\vert _{\mathbf{W}}$ in $E$: it is the
change when no work is done ($dW=0$). The relation $dQ=TdS$ is very
interesting in that it not only turns the well-known Clausius inequality
$d_{\text{e}}Q=T_{0}d_{\text{e}}S\leq$ $T_{0}dS$ into an equality but also
makes the generalized work $dW$ in Eq. (\ref{FirstLaw-SI}) isentropic, whereas
$d_{\text{e}}W=-d\widetilde{W}$ is not. This aspect of $dW$ will prove useful
later. Moreover, Eq. (\ref{FirstLaw-SI}) allows us to uniquely identify
generalized heat and work; this work is always isentropic and this heat is
always isometric. On the other hand, the MI-heat and the MI- work suffer from
ambiguity; see, for example, Kestin \cite{Kestin}. We summaries these
important observations here as

\begin{conclusion}
\label{Conclusion-GeneralizedWorkHeat} The generalized work $dW$ is isentropic
change in the energy, while the generalized heat $dQ=TdS$ is the isometric
change in the energy.
\end{conclusion}

It is important to draw attention to the following important fact. We first
recognize that the first law in Eq. (\ref{FirstLaw-MI}) really refers to the
change in the energy caused by exchange quantities. Therefore, $dE$ on the
left truly represents $d_{\text{e}}E$; see Eq. (\ref{E-Partition}).
Accordingly, we write Eq. (\ref{FirstLaw-MI}) as%
\begin{equation}
d_{\text{e}}E=d_{\text{e}}Q-d_{\text{e}}W. \label{First-Law-deE}%
\end{equation}
Subtracting this equation from Eq. (\ref{FirstLaw-SI}), we obtain the identity%
\begin{equation}
d_{\text{i}}E=d_{\text{i}}Q-d_{\text{i}}W\equiv0, \label{First-Law-diE}%
\end{equation}
where we have used Eq. (\ref{diE-EQ}). The above equation is nothing but the
identity in Eq. (\ref{diQ-diW-EQ}). However, the analysis also demonstrates
the important fact that the first law can be applied either to the exchange
process or to the interior process. In the last formulation, it is also
applicable to an isolated system.

The analog of $d_{\text{i}}W$ between two nearby \emph{equilibrium} states is
well known in classical thermodynamics and is usually called the
\emph{dissipated} work; see for example, p. 12 in Woods \cite{Woods}. The
dissipated work $(P-P_{0})dV$ is the difference between the generalized work
$dW=PdV$ and the reversible work $dW_{\text{rev}}=P_{0}dV$ done by $\Sigma
$\ between the same two equilibrium states and is always nonnegative in
accordance with Theorem \ref{Theorem-Work-Energy-Principle} that will be
proved later in its general form (which refers to $d_{\text{i}}W$ between any
two nearby \emph{arbitrary} states that need not necessarily be equilibrium
states). If they are not, then the procedure described by Wood \ to determine
dissipated work cannot work as there is no reversible path connecting two
nonequilibrium states. This makes our approach for introducing $d_{\text{i}%
}W=dW-d_{\text{e}}W$ much more general.

\subsection{The Second Law}

We rewrite $dE$ in Eq. (\ref{FirstLaw-SI}) as
\[
dW=TdS-dE=-dF+(T-T_{0})dS,
\]
where
\begin{equation}
F=E-T_{0}S \label{HelmholtzFreeEnergy}%
\end{equation}
is the Helmholtz free energy; note the presence of $T_{0}$ and not $T$ in $F$.
As $(T-T_{0})dS\leq0$ established later, see Eq. (\ref{SecondLaw-Consequences}%
), we conclude
\begin{equation}
dW\leq-dF, \label{SecondLaw-Work-SI}%
\end{equation}
as a consequence of the second law. It is also easy to see from this that
\begin{equation}
d\widetilde{W}\geq dF+d_{\text{i}}W\geq dF, \label{SecondLaw-Work-MI}%
\end{equation}
where we have used the fact that $d_{\text{i}}W\geq0$, which is again a
consequence of the second law as shown in Theorem
\ref{Theorem-Work-Energy-Principle}. Similar inequalities also hold in the
inclusive approach:
\begin{equation}
dW^{\prime}\leq-dF^{\prime},d\widetilde{W^{\prime}}\geq dF^{\prime}.
\label{SecondLaw-Work-Prime}%
\end{equation}

\subsection{Dissipation and Thermodynamic Forces\label{Sec-Dissipation}}

We first turn to the relationship of dissipation with the entropy generation
$d_{\text{i}}S$ for $\Sigma$. The general result for the present case is%
\begin{equation}
Td_{\text{i}}S=\frac{(T_{0}-T)}{T_{0}}d_{\text{e}}Q+(P-P_{0})dV+Ad\xi\geq0;
\label{Irreversible EntropyGeneration-Complete}%
\end{equation}
see, for example, Ref. \cite{Prigogine}. In accordance with the second law in
Eq. (\ref{SecondLaw}), each term on the right of the first equation must be
nonnegative. The first term is due to heat exchange $d_{\text{e}}Q$ with
$\widetilde{\Sigma}$ at different temperatures, the second term is the
irreversible work due to pressure imbalance and the third term is the
irreversible work due to $\xi$. It is customary to think of $A$ as the
instantaneous "force" associated with the "displacement" $d\xi$. As the
affinity $A_{0}=0$ for $\widetilde{\Sigma}$, $A-A_{0}\equiv A$ also denotes
the affinity imbalance. Therefore, the last two terms above denote dissipated
work, which is customarily called dissipation; the first term is not
considered part of it. Therefore, in this work, we will use dissipation to
denote the dissipated work.

It is clear that the root cause of dissipation is a "\emph{force imbalance}"
$P(t)-P_{0},A(t)-A_{0}\equiv A(t)$, etc.
\cite{Kestin,Woods,Gujrati-Heat-Work0,Gujrati-Heat-Work,Gujrati-I,Gujrati-II,Gujrati-III,Gujrati-Entropy1,Gujrati-Entropy2,Gujrati-GeneralizedWork}
between the external and the internal forces performing work, giving rise to
an internal work $d_{\text{i}}W$ due to all kinds of force imbalances, which
is not captured by using the MI-work $d\widetilde{W}$ unless we recognize that
there must be some \emph{nonvanishing} force imbalance to cause dissipation.
The irreversible or dissipated work is \cite{Kestin,Woods,Prigogine}
\begin{equation}
d_{\text{i}}W=(P-P_{0})dV+Ad\xi\geq0, \label{Work-Irreversible}%
\end{equation}
which is generated within $\Sigma$; we give a general proof of this result
(see Eq. (\ref{SecondLaw-IrriversibleWork})) in Theorem
\ref{Theorem-Work-Energy-Principle}. The pressure and affinity imbalance
$P-P_{0}\ $and $A-A_{0}\equiv A$ are commonly known as \emph{thermodynamic
forces} driving the system towards equilibrium.

If we include the relative velocity between the subsystem $\Sigma_{\text{p}}$
formed by the piston and the subsystem $\Sigma_{\text{gc}}$ of the gas and the
cylinder ($\Sigma=\Sigma_{\text{p}}\cup\Sigma_{\text{gc}}$), we must account
for \cite{Gujrati-II} an additional term $-\mathbf{V\cdot}d\mathbf{P}%
_{\text{p}}$ in $d_{\text{i}}W$ due to the relative velocity $\mathbf{V}$:
\begin{equation}
d_{\text{i}}W=(P-P_{0})dV-\mathbf{V\cdot}d\mathbf{P}_{\text{p}}+Ad\xi.
\label{Irreversible_Work-Piston-General}%
\end{equation}
This is reviewed in Sec. \ref{Sec-Friction}. We will come back to this term
later when we consider the motion of a particle attached to a spring; see Fig.
\ref{Fig_Piston-Spring}(b).

The irreversible work is present even if there is no temperature difference
such as in an isothermal process as long as there exists some nonzero
thermodynamic force. The resulting irreversible entropy generation is then
given by%
\begin{equation}
Td_{\text{i}}S=d_{\text{i}}W\geq0. \label{Irreversible EntropyGeneration}%
\end{equation}
We summarize this as a conclusion \cite{Prigogine}:

\begin{conclusion}
\label{Conclusion-ThermodynamicForce-Irreversibility}To have dissipation, it
is necessary and sufficient to have a nonzero thermodynamic force. In its
absence, there can be no dissipation regardless of the time-dependence of the
work process.
\end{conclusion}

\section{Microstate Thermodynamics\label{Sec-MicrostateThermodynamics}}

We have applied the internal approach microscopically to the set $\left\{
\mathfrak{m}_{k}\right\}  $ of microstates
\cite{Gujrati-Heat-Work0,Gujrati-Entropy1,Gujrati-Entropy2} to obtain a
\emph{microscopic representation} of (generalized) work and heat in terms of
the set of microstate probabilities $\left\{  p_{k}\right\}  $ and other
SI-quantities. We expand on this approach here and exploit it. As we will be
dealing with microstates, we will mostly use their energy set $\left\{
E_{k}\right\}  $ instead of $\mathcal{H}$.

\subsection{General Discussion}

As $\mathcal{H}$ depends on $V$ and $\xi$ as parameters, $E_{k}$ also depends
on them as parameters so we write it as $E_{k}(V,\xi)$. Then the thermodynamic
energy is given by
\begin{equation}
E=\left\langle E\right\rangle \doteq%
%TCIMACRO{\tsum \nolimits_{k}}%
%BeginExpansion
{\textstyle\sum\nolimits_{k}}
%EndExpansion
p_{k}E_{k}(V,\xi). \label{ThermodynamicEnergy}%
\end{equation}
The dependence of $p_{k}$ is not important. Now, we have%
\begin{equation}
dE=d\left\langle E\right\rangle =%
%TCIMACRO{\tsum \nolimits_{k}}%
%BeginExpansion
{\textstyle\sum\nolimits_{k}}
%EndExpansion
p_{k}dE_{k}+%
%TCIMACRO{\tsum \nolimits_{k}}%
%BeginExpansion
{\textstyle\sum\nolimits_{k}}
%EndExpansion
E_{k}dp_{k}; \label{Energy Differential}%
\end{equation}
here,
\begin{equation}
dE_{k}\doteq(\partial E_{k}/\partial V)dV+(\partial E_{k}/\partial\xi)d\xi.
\label{MicrostateEnergyChange}%
\end{equation}
As $\left\{  p_{k}\right\}  $ is not changed in the first sum, it is evaluated
at \emph{fixed entropy }of $\Sigma$ \cite{Gujrati-Heat-Work0,Gujrati-Entropy2}
so this isentropic sum must be identified with $-dW$:%
\begin{equation}
dW=-\left\langle dE\right\rangle \doteq-%
%TCIMACRO{\tsum \nolimits_{k}}%
%BeginExpansion
{\textstyle\sum\nolimits_{k}}
%EndExpansion
p_{k}dE_{k}; \label{GeneralizedWork0}%
\end{equation}
the summand must denote $dW_{k}$; compare with Eq. (\ref{AverageWorks}):%
\begin{equation}
dW_{k}=-dE_{k}. \label{dW-dE}%
\end{equation}
It follows from Eq. (\ref{E_k-partition}) that%
\begin{equation}
d_{\text{e}}W_{k}=-d_{\text{e}}E_{k},d_{\text{i}}W_{k}=-d_{\text{i}}E_{k}.
\label{deW-deE-diW-diE}%
\end{equation}
We also find that $dW$ only changes $E_{k}$'s but not $p_{k}$'s. The second
sum in $dE$ must be identified with the generalized heat $dQ$%
\begin{equation}
dQ\doteq%
%TCIMACRO{\tsum \nolimits_{k}}%
%BeginExpansion
{\textstyle\sum\nolimits_{k}}
%EndExpansion
E_{k}dp_{k}=TdS, \label{GeneralizedHeat0}%
\end{equation}
see Eq. (\ref{FirstLaw-SI}). It is possible to express $dQ$ also as an average
by introducing Gibbs \emph{index of probability} \cite{Gibbs}
\begin{equation}
\eta_{k}\doteq\ln p_{k}. \label{IndexProbability}%
\end{equation}
We have%
\begin{equation}
dQ=\left\langle dQ\right\rangle =\left\langle Ed\eta\right\rangle \doteq%
%TCIMACRO{\tsum \nolimits_{k}}%
%BeginExpansion
{\textstyle\sum\nolimits_{k}}
%EndExpansion
p_{k}E_{k}d\eta_{k}, \label{GeneralizedHeat-Av}%
\end{equation}
so that we can formally introduce a quantity
\begin{equation}
dQ_{k}\doteq E_{k}d\eta_{k} \label{MicrostateHeat}%
\end{equation}
for $\mathfrak{m}_{k}$, which we will refer to as \emph{microstate heat} or
\emph{microheat} in short. It also changes the probability at fixed $E_{k}$.
Thus, both $dQ$ and $dQ_{k}$ are stochastic quantities. We should emphasize
that our concept of microstate heat is very different from the concept of heat
currently used in the literature.

We also observe that the generalized heat $dQ$ and $dQ_{k}$ only changes
$p_{k}$'s, but not $E_{k}$'s. Therefore, the following aspects of the
generalized quantities will be central in our discussion later, which we
present as two conclusions:

\begin{conclusion}
\label{Conclusion-MicroWorkHeat}The microwork $dW_{k}$ changes $E_{k}$ without
changing $p_{k}$. Thus, a purely mechanical approach can be used for
microwork. The effect of microheat is to change $p_{k}$ but not $E_{k}$ so it
is microheat that makes a thermodynamic process stochastic by changing $p_{k}$.
\end{conclusion}

\begin{conclusion}
\label{Conclusion-Microheat-dE}While the microheat $dQ_{k}$ doe not change
$E_{k}$, it does contribute to the energy change $dE$ through $dQ=%
%TCIMACRO{\tsum \nolimits_{k}}%
%BeginExpansion
{\textstyle\sum\nolimits_{k}}
%EndExpansion
E_{k}dp_{k}$ as $p_{k}$'s change.
\end{conclusion}

The following conclusion also follows from the above general discussion.

\begin{conclusion}
Comparing $d\left\langle E\right\rangle $ in Eq. (\ref{Energy Differential})
with $\left\langle dE\right\rangle =-dW$ given above, we conclude that
\begin{equation}
d\left\langle E\right\rangle \neq\left\langle dE\right\rangle .
\label{dAvE-AvdE}%
\end{equation}
The difference is $dQ$ above.
\end{conclusion}

\subsection{Microscopic Force Imbalance}

The force imbalance necessary for irreversibility has its root in a similar
imbalance at the microstate level \cite{Gujrati-GeneralizedWork}. We only have
to recall Eq. (\ref{AverageWorks}) and to recognize that
\begin{equation}
P=%
%TCIMACRO{\tsum \nolimits_{k}}%
%BeginExpansion
{\textstyle\sum\nolimits_{k}}
%EndExpansion
p_{k}P_{k},A=%
%TCIMACRO{\tsum \nolimits_{k}}%
%BeginExpansion
{\textstyle\sum\nolimits_{k}}
%EndExpansion
p_{k}A_{k}, \label{Average-P-A}%
\end{equation}
where $P_{k}\doteq-\partial E_{k}/\partial V,A_{k}\doteq-\partial
E_{k}/\partial\xi$ are the instantaneous pressure and affinity associated with
$\mathfrak{m}_{k}$ at that time $t$. (We can similarly introduce
$\mathbf{V}_{k}$ if we are interested in it.) It follows from Eq.
(\ref{MicrostateEnergyChange}) that $dE_{k}=-dW_{k}$. Using Eqs. (\ref{dW-dE})
and (\ref{deW-deE-diW-diE}), along with (\ref{SI-Work}), (\ref{MI-Work}),
(\ref{GeneralizedWork}) and (\ref{Work-Irreversible}), we conclude that
\begin{subequations}
\label{Works-Microstates}%
\begin{align}
dW_{k}  &  =-dE_{k}=P_{k}dV+A_{k}d\xi,\label{Works-Microstates-total}\\
d\widetilde{W}_{k}  &  =-d_{\text{e}}W_{k}=d_{\text{e}}E_{k}=-P_{0}%
dV,\label{Works-Microstates-Exchange}\\
d_{\text{i}}W_{k}  &  =-d_{\text{i}}E_{k}=(P_{k}-P_{0})dV+A_{k}d\xi.
\label{Works-Microstates-Internal}%
\end{align}
The important point to note is that the force imbalances $P_{k}-P_{0}$ and
$A_{k}$ determine the internal changes $d_{\text{i}}E_{k}$ or $d_{\text{i}%
}W_{k}$ for $\mathfrak{m}_{k}$. On the other hand, $dE_{k}$ is determined by
the SI-fields $P_{k}$ and $A_{k}$, while $d_{\text{e}}E_{k}$ is determined by
the MI-field $P_{0}$. From now on, we will reserve the use of force imbalance
to denote it only at the microstate level. At this level, it appears as a
mechanical force imbalance. In contrast, we will refer to macroscopic force
imbalance from now on as thermodynamic forces. We will in the following see
that there are reasons to make a clear distinction between the two; compare
Conclusion \ref{Conclusion-ThermodynamicForce-Irreversibility} with Conclusion
\ref{Conclusion-ForceImbalance-Dissipation}.

The following comment for various microscopic works in Eq.
(\ref{Works-Microstates}) should be obvious: the average of Eq.
(\ref{Works-Microstates-total}) gives Eq. (\ref{SI-Work}), the average of Eq.
(\ref{Works-Microstates-Exchange}) yields Eq. (\ref{MI-Work}), and the average
of Eq. (\ref{Works-Microstates-Internal}) gives Eq. (\ref{Work-Irreversible}).
This helps us extend Conclusion
\ref{Conclusion-ThermodynamicForce-Irreversibility} for thermodynamic forces
to a new conclusion valid only for mechanical force imbalances $P_{k}-P_{0}$
and $A_{k}$ to accommodate the possibility when they result in%
\end{subequations}
\begin{equation}%
%TCIMACRO{\tsum \nolimits_{k}}%
%BeginExpansion
{\textstyle\sum\nolimits_{k}}
%EndExpansion
p_{k}(P_{k}-P_{0})=0,%
%TCIMACRO{\tsum \nolimits_{k}}%
%BeginExpansion
{\textstyle\sum\nolimits_{k}}
%EndExpansion
p_{k}A_{k}=0, \label{EQ-condition}%
\end{equation}
even if $P_{k}-P_{0}$ and $A_{k}$ are nonzero. In this case, the system is in
EQ despite the fact that the force imbalance is present. Indeed, we will see
later that the presence of force imbalance is a ubiquitous phenomenon and must
be incorporated even if they result in zero thermodynamic forces. The new
conclusion is the following:

\begin{conclusion}
\label{Conclusion-ForceImbalance-Dissipation}To have dissipation, it is
necessary but not sufficient to have a nonzero mechanical force imbalance.
Even in its presence, there may be no dissipation if Eq. (\ref{EQ-condition})
is satisfied. Nevertheless, it must be incorporated for a consistent theory.
\end{conclusion}

\subsection{Accumulation of thermodynamic work along a Trajectory}

Consider a path $\gamma$ taken by the $k$th microstate $\mathfrak{m}_{k}$
during a NEQ process $\mathcal{P}_{0}$ and $dW_{k}$ and $d\widetilde{W}_{k}$
during $t$\ and $t+dt$ along $\gamma$. The net works $\Delta W_{k}$ and
$\Delta\widetilde{W}_{k}$ is the integral over $\gamma$ as shown in Eq.
(\ref{CumulativeWorks}). The thermodynamic works are averages $\left\langle
\cdot\right\rangle $ over $\left\{  p_{k}\right\}  $ (not to be confused with
$\left\langle \cdot\right\rangle _{0}$ in Eq. (\ref{JarzynskiRelation})):%
\begin{align}
dW  &  =\left\langle dW\right\rangle =%
%TCIMACRO{\tsum \limits_{k}}%
%BeginExpansion
{\textstyle\sum\limits_{k}}
%EndExpansion
p_{k}dW_{k},~d\widetilde{W}=\left\langle d\widetilde{W}\right\rangle =%
%TCIMACRO{\tsum \limits_{k}}%
%BeginExpansion
{\textstyle\sum\limits_{k}}
%EndExpansion
p_{k}d\widetilde{W}_{k},\nonumber\\
\Delta W  &  =%
%TCIMACRO{\tint \nolimits_{\gamma}}%
%BeginExpansion
{\textstyle\int\nolimits_{\gamma}}
%EndExpansion
dW=%
%TCIMACRO{\tint \nolimits_{\gamma}}%
%BeginExpansion
{\textstyle\int\nolimits_{\gamma}}
%EndExpansion
\left\langle dW\right\rangle =%
%TCIMACRO{\tint \nolimits_{\gamma}}%
%BeginExpansion
{\textstyle\int\nolimits_{\gamma}}
%EndExpansion%
%TCIMACRO{\tsum \limits_{k}}%
%BeginExpansion
{\textstyle\sum\limits_{k}}
%EndExpansion
p_{k}dW_{k}~\mathbf{,}\label{Av-ExternalWork-Medium}\\
\Delta\widetilde{W}  &  =%
%TCIMACRO{\tint \nolimits_{\gamma}}%
%BeginExpansion
{\textstyle\int\nolimits_{\gamma}}
%EndExpansion
d\widetilde{W}=%
%TCIMACRO{\tint \nolimits_{\gamma}}%
%BeginExpansion
{\textstyle\int\nolimits_{\gamma}}
%EndExpansion
\left\langle d\widetilde{W}\right\rangle =%
%TCIMACRO{\tint \nolimits_{\gamma}}%
%BeginExpansion
{\textstyle\int\nolimits_{\gamma}}
%EndExpansion%
%TCIMACRO{\tsum \limits_{k}}%
%BeginExpansion
{\textstyle\sum\limits_{k}}
%EndExpansion
p_{k}d\widetilde{W}_{k}~\mathbf{.}\nonumber
\end{align}

\subsection{Exchange and Irreversible Components}

Let us focus on the exclusive approach and express exchange quantities
microscopically. Using the partition of $dE_{k}$ in $dW_{k}$, we have%
\[
d_{\text{e}}W=-%
%TCIMACRO{\tsum \nolimits_{k}}%
%BeginExpansion
{\textstyle\sum\nolimits_{k}}
%EndExpansion
p_{k}d_{\text{e}}E_{k},\ d_{\text{i}}W=-%
%TCIMACRO{\tsum \nolimits_{k}}%
%BeginExpansion
{\textstyle\sum\nolimits_{k}}
%EndExpansion
p_{k}d_{\text{i}}E_{k}\geq0.
\]
Similarly, using $dp_{k}\equiv d_{\text{e}}p_{k}+d_{\text{i}}p_{k}$ from Eq.
(\ref{Probability-partition}), we have
\[
d_{\text{e}}Q=%
%TCIMACRO{\tsum \nolimits_{k}}%
%BeginExpansion
{\textstyle\sum\nolimits_{k}}
%EndExpansion
E_{k}d_{\text{e}}p_{k},\ d_{\text{i}}Q=%
%TCIMACRO{\tsum \nolimits_{k}}%
%BeginExpansion
{\textstyle\sum\nolimits_{k}}
%EndExpansion
E_{k}d_{\text{i}}p_{k}\geq0,
\]
and
\[
d_{\text{e}}S=-%
%TCIMACRO{\tsum \nolimits_{k}}%
%BeginExpansion
{\textstyle\sum\nolimits_{k}}
%EndExpansion
\ln p_{k}d_{\text{e}}p_{k},\ \ d_{\text{i}}S=-%
%TCIMACRO{\tsum \nolimits_{k}}%
%BeginExpansion
{\textstyle\sum\nolimits_{k}}
%EndExpansion
\ln p_{k}d_{\text{i}}p_{k}\geq0.
\]
We finally have%
\begin{align}
\Delta_{\text{e}}W  &  =-%
%TCIMACRO{\tsum \nolimits_{k}}%
%BeginExpansion
{\textstyle\sum\nolimits_{k}}
%EndExpansion%
%TCIMACRO{\tint \nolimits_{\gamma}}%
%BeginExpansion
{\textstyle\int\nolimits_{\gamma}}
%EndExpansion
p_{k}d_{\text{e}}E_{k},\Delta_{\text{i}}W=-%
%TCIMACRO{\tsum \nolimits_{k}}%
%BeginExpansion
{\textstyle\sum\nolimits_{k}}
%EndExpansion%
%TCIMACRO{\tint \nolimits_{\gamma}}%
%BeginExpansion
{\textstyle\int\nolimits_{\gamma}}
%EndExpansion
p_{k}d_{\text{i}}E_{k}\geq0,\nonumber\\
\Delta_{\text{e}}Q  &  =%
%TCIMACRO{\tsum \nolimits_{k}}%
%BeginExpansion
{\textstyle\sum\nolimits_{k}}
%EndExpansion%
%TCIMACRO{\tint \nolimits_{\gamma}}%
%BeginExpansion
{\textstyle\int\nolimits_{\gamma}}
%EndExpansion
E_{k}d_{\text{e}}p_{k},\ \Delta_{\text{i}}Q=%
%TCIMACRO{\tsum \nolimits_{k}}%
%BeginExpansion
{\textstyle\sum\nolimits_{k}}
%EndExpansion%
%TCIMACRO{\tint \nolimits_{\gamma}}%
%BeginExpansion
{\textstyle\int\nolimits_{\gamma}}
%EndExpansion
E_{k}d_{\text{i}}p_{k}\geq0,\label{CorrectedForms}\\
\Delta W  &  =-%
%TCIMACRO{\tsum \nolimits_{k}}%
%BeginExpansion
{\textstyle\sum\nolimits_{k}}
%EndExpansion%
%TCIMACRO{\tint \nolimits_{\gamma}}%
%BeginExpansion
{\textstyle\int\nolimits_{\gamma}}
%EndExpansion
p_{k}dE_{k},\Delta Q=%
%TCIMACRO{\tsum \nolimits_{k}}%
%BeginExpansion
{\textstyle\sum\nolimits_{k}}
%EndExpansion%
%TCIMACRO{\tint \nolimits_{\gamma}}%
%BeginExpansion
{\textstyle\int\nolimits_{\gamma}}
%EndExpansion
E_{k}dp_{k},\nonumber
\end{align}
along with $\Delta_{\text{i}}W=\Delta_{\text{i}}Q$. The equation for $\Delta
W$ above and its differential form $dW$ provide the correct identification at
the microscopic level of the SI-quantities, and must be used to account for irreversibility.

Furthermore, using Eq. (\ref{E_k-partition}) in Eq. (\ref{Energy Differential}%
), we have%
\begin{align}
d_{\text{i}}E  &  \doteq%
%TCIMACRO{\tsum \nolimits_{k}}%
%BeginExpansion
{\textstyle\sum\nolimits_{k}}
%EndExpansion
p_{k}d_{\text{i}}E_{k}+%
%TCIMACRO{\tsum \nolimits_{k}}%
%BeginExpansion
{\textstyle\sum\nolimits_{k}}
%EndExpansion
E_{k}d_{\text{i}}p_{k}=0,\label{Av-diE}\\
dE  &  =d_{\text{e}}E\doteq%
%TCIMACRO{\tsum \nolimits_{k}}%
%BeginExpansion
{\textstyle\sum\nolimits_{k}}
%EndExpansion
p_{k}d_{\text{e}}E_{k}+%
%TCIMACRO{\tsum \nolimits_{k}}%
%BeginExpansion
{\textstyle\sum\nolimits_{k}}
%EndExpansion
E_{k}d_{\text{e}}p_{k}, \label{Av-deE}%
\end{align}
where we have used the identity$\ d_{\text{i}}W=d_{\text{i}}Q$ from Eq.
(\ref{diQ-diW-EQ}) in the top equation to show consistency of the above
approach with the important identity in Eq. (\ref{diE-EQ}); the first term
here represents $(-d_{\text{i}}W)$ and the second term stands for
$d_{\text{i}}Q$.

\begin{claim}
\label{Claim-diEk-diE} Most important conclusion of our approach is that even
if $d_{\text{i}}E_{k}\neq0$, $d_{\text{i}}E=0$ as is well known; see Eq.
(\ref{diE-EQ}).
\end{claim}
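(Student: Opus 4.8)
The plan is to treat the claim as a consistency check: decompose the energy differential at the microscopic level and show that the two resulting pieces cancel. First I would start from the microscopic expansion of the energy differential, Eq.~(\ref{Energy Differential}), namely $dE=\sum_k p_k\,dE_k+\sum_k E_k\,dp_k$, which is just the Leibniz rule applied to $E=\sum_k p_k E_k$ and summed over $k$. Since $d$, $d_{\text{e}}$ and $d_{\text{i}}$ are linear differential operators obeying the same product rule (stated just after Eq.~(\ref{Differentiation-Additive})), I can apply $d_{\text{i}}$ termwise to each product $p_k E_k$ to obtain
\[
d_{\text{i}}E=\sum_k p_k\,d_{\text{i}}E_k+\sum_k E_k\,d_{\text{i}}p_k,
\]
which is exactly Eq.~(\ref{Av-diE}).

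Next I would identify each sum with a quantity already introduced. For the first sum, Eq.~(\ref{deW-deE-diW-diE}) gives $d_{\text{i}}W_k=-d_{\text{i}}E_k$, so averaging over $\{p_k\}$ and using Eq.~(\ref{Av-ExternalWork-Medium}) yields $\sum_k p_k\,d_{\text{i}}E_k=-\sum_k p_k\,d_{\text{i}}W_k=-d_{\text{i}}W$. For the second sum, the microscopic representation of the irreversible heat from the Exchange and Irreversible Components subsection, $d_{\text{i}}Q=\sum_k E_k\,d_{\text{i}}p_k$ (itself following from $dQ=\sum_k E_k\,dp_k$ together with the partition $dp_k=d_{\text{e}}p_k+d_{\text{i}}p_k$ of Eq.~(\ref{Probability-partition})), identifies it as $d_{\text{i}}Q$. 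Hence $d_{\text{i}}E=-d_{\text{i}}W+d_{\text{i}}Q$. Finally I would invoke the identity $d_{\text{i}}Q=d_{\text{i}}W$ of Eq.~(\ref{diQ-diW-EQ})---the statement that the internal (dissipated) work reappears as internal heat---to conclude $d_{\text{i}}E=0$, in agreement with the macroscopic Eq.~(\ref{diE-EQ}). I would stress that none of these steps uses $d_{\text{i}}E_k=0$; on the contrary, Eq.~(\ref{Works-Microstates-Internal}) exhibits $d_{\text{i}}E_k=-(P_k-P_0)\,dV-A_k\,d\xi$, which is generically nonzero, so the vanishing of $d_{\text{i}}E$ comes entirely from the cancellation between the energy-level contribution $-d_{\text{i}}W$ and the probability-redistribution contribution $+d_{\text{i}}Q$.

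The main thing to be careful about---and essentially the only delicate point, since this is a consistency verification rather than a hard theorem---is the legitimacy of letting $d_{\text{i}}$ act on the microstate quantities $p_k$ and $E_k$ and commute with the sum over $k$, which is justified by the replica interpretation of $\{p_k\}$ recalled around Eq.~(\ref{Probability-partition}), together with the correctness of the two microscopic identifications $-d_{\text{i}}W=\sum_k p_k\,d_{\text{i}}E_k$ and $d_{\text{i}}Q=\sum_k E_k\,d_{\text{i}}p_k$. Once these are in place the conclusion is immediate. It is also worth remarking that the argument shows $d_{\text{i}}Q=d_{\text{i}}W$ and $d_{\text{i}}E=0$ to be mutually consistent rather than logically independent, so the claim is best read as confirming that the microstate-level bookkeeping reproduces the familiar macroscopic fact that no internal process can change the energy.
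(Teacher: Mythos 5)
Your proposal is correct and follows essentially the same route as the paper: decompose $d_{\text{i}}E=\sum_{k}p_{k}d_{\text{i}}E_{k}+\sum_{k}E_{k}d_{\text{i}}p_{k}$ via Eq.~(\ref{E_k-partition}) applied to Eq.~(\ref{Energy Differential}), identify the two sums as $-d_{\text{i}}W$ and $d_{\text{i}}Q$, and invoke $d_{\text{i}}Q=d_{\text{i}}W$ from Eq.~(\ref{diQ-diW-EQ}) to get the cancellation. Your added remarks on the consistency-check character of the argument and on the product rule for $d_{\text{i}}$ are accurate but do not change the substance.
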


The Eq. (\ref{Av-deE}) reproduces Eq. (\ref{FirstLaw-MI}), which again shows
consistency of our approach with thermodynamics.\ 

\section{Exclusive and Inclusive
Hamiltonians\label{Sec-Exclusive-Inclusive-Hamiltonians}}

The Hamiltonian $\mathcal{H}(V,\xi)$ above is given in terms of SI-variables
and has no information about the medium. Because of this, it has become
customary to call it the exclusive Hamiltonian. However, as $V$ is coupled to
the external pressure $P_{0}$, it is also useful to introduce another
Hamiltonian in which $V$ is replaced by $P_{0}$. This is a well-known trick in
classical equilibrium thermodynamics ($P=P_{0}$), where such a transformation
is known as the Legendre transform. Instead of considering the energy $E(V)$,
we consider its Legendre transform $E(V)+P_{0}V=E(V)+PV$. It is well known
that in equilibrium, this transform is nothing but the enthalpy $H(P)$
\cite{Landau}. A simple example is to consider the situation in Fig.
\ref{Fig_Piston-Spring}(a) in which the system $\Sigma$ is formed by the gas
in the left chamber delimited by the cylinder and the right chamber with the
movable piston and the cylinder is connected to a large sealed container with
a gas at pressure $P_{0}\neq P$; the right chamber and the sealed container
along with the the cylinder and the piston forms $\widetilde{\Sigma}$.

Taking this cue of the EQ Legendre transform, the inclusive Hamiltonian
$\mathcal{H}^{\prime}$\ is defined as \cite{Bochkov,Jarzynski}
\begin{equation}
\mathcal{H}^{\prime}(V,\xi,P_{0})=\mathcal{H}(V,\xi)\mathcal{+}P_{0}V,
\label{InclusiveHamiltonian}%
\end{equation}
even when we are dealing with a NEQ situation such as when $P\neq P_{0}$
and/or $A\neq0$; here%
\[
P_{0}\doteq-\partial\widetilde{E}/\partial\widetilde{V}%
\]
is the conjugate field of the medium; compare this with $P\doteq-\partial
E/\partial V$. As we have shown \cite{Gujrati-II,Gujrati-III}, the NEQ
Legendre transform has a very different property. We find that
\begin{equation}
V=\frac{\partial\mathcal{H}^{\prime}}{\partial P_{0}},A=-\frac{\partial
\mathcal{H}^{\prime}}{\partial\xi},P^{\prime}\doteq-\frac{\partial
\mathcal{H}^{\prime}}{\partial V}=P-P_{0}\neq0 \label{Inclusive-Fields}%
\end{equation}
so that $V,\xi$ and $P_{0}$ are parameters in $\mathcal{H}^{\prime}$ unless
$P^{\prime}$ vanishes, which will happen only under mechanical equilibrium.
Therefore, we can think of $\mathcal{H}^{\prime}(V,\xi,P_{0})$ as another
Hamiltonian with the three parameters $V,\xi$ and $P_{0}$; each parameter will
have its own contribution to work $dW\ $for the inclusive Hamiltonian
$\mathcal{H}^{\prime}(V,\xi,P_{0})$. However, the main difference is that
$P_{0}$ is not an extensive parameter. As it will be treated as a work
variable, the conjugate field will be an extensive variable. The change
$dE^{\prime}$in the inclusive Hamiltonian $\mathcal{H}^{\prime}$ is%
\begin{equation}
dE^{\prime}=-P^{\prime}dV-Ad\xi+VdP_{0}, \label{Inclusive Energy}%
\end{equation}
which will reduce to the EQ form $dE^{\prime}=VdP_{0}$ for the EQ enthalpy. As
has already been discussed in the literature \cite{Bochkov,Jarzynski} the
inclusive Hamiltonian $\mathcal{H}^{\prime}$ should be thought of as referring
to a different system $\Sigma^{\prime}$. This means that $\mathcal{H}^{\prime
}$ or the corresponding energy $E^{\prime}$\ is an\ SI quantity for
$\Sigma^{\prime}$. Thus, there are SI analogs of generalized work $dW^{\prime
}$, generalized heat $dQ^{\prime}$, etc. along with MI analogs like
$d\widetilde{W^{\prime}}$, etc. Indeed, there is \emph{no} fundamental
difference between the exclusive and inclusive approaches. All results for the
inclusive Hamiltonian can be simply converted for the inclusive Hamiltonian by
simply adding a prime on all the quantities and adding the contribution from
the parameter $P_{0}$. This will become clear as we go along.

We see from Eqs. (\ref{Inclusive Energy}) and (\ref{Gibbs-E-IEQ}) that
\begin{equation}
d(E^{\prime}-E)=d(P_{0}V). \label{Incl-Excl-H-Difference-Infinite}%
\end{equation}
This is basically what we see from the definition of $\mathcal{H}^{\prime}$ in
Eq. (\ref{InclusiveHamiltonian}): the difference $\mathcal{H}^{\prime
}-\mathcal{H}$ is nothing but $P_{0}V$. This remains true also for the
accumulated change

\textbf{Comparison with Jarzynski's Approach}: Finally, \ we remark that
Jarzynski's discussion \cite{Jarzynski} of the inclusive Hamiltonian differs
from our discussion in that Jarzynski overlooks the first term $-P^{\prime}dV$
in $dE^{\prime}$ above. Accordingly, he assumes the thermodynamic force
$P^{\prime}=0$, which makes his conclusion very different from us; see
Conclusion \ref{Conclusion-ThermodynamicForce-Irreversibility}. As this is an
important difference, we state it as a

\begin{conclusion}
\label{Conclusion-Jarzynski-InclusiveApproach0}Jarzynski's approach to the
inclusive Hamiltonian sets $P^{\prime}=P-P_{0}=0$. This results in the
complete absence of dissipation as he does not consider any internal variable
$\xi$ in his analysis.
\end{conclusion}

\section{Some Clarifying Examples\label{Sec-Examples}}

Before proceeding further, we clarify the distinction between various
thermodynamic works $dW,d_{\text{e}}W$ and $d\widetilde{W}$ or their
microstate analogs by two simple examples. It is also clear from the previous
section that the pressure difference $P^{\prime}$ plays an important role in
capturing dissipation; see Conclusion
\ref{Conclusion-Jarzynski-InclusiveApproach0}. Only under mechanical
equilibrium do we have $P^{\prime}=0$. The following examples will make this
point abundantly clear that a nonzero force imbalance like $P^{\prime}$ is
just as common even in classical mechanics whenever there is absence of
mechanical equilibrium.

\subsection{Force Imbalance in a Mechanical System: A Microstate
Approach\label{Sec-Example-MechSystem}}

\subsubsection{Exclusive Approach}

Consider as our system a general but a purely classical mechanical
one-dimensional massless spring of arbitrary \emph{exclusive} Hamiltonian
$\mathcal{H}(x)$ with one end fixed at an immobile wall on the left and the
other end with a mass $m$ free to move; see Fig. \ref{Fig_Piston-Spring}(b).
The center of mass of $m$ is located at $x$ from the left wall. For the moment
we consider a vacuum instead of a fluid filling system so we do not need to
worry about any frictional drag; we will consider this complication in Sec.
\ref{Sec-Friction}. The free end is pulled by an \emph{external} force (not
necessarily a constant) $F_{0}$ applied at time $t=0$; thus $x$ acts as a work
parameter. We do not show the center-of-mass momentum $p$ as it plays no role
in determining work. We treat the system purely mechanically. Therefore, the
exercise here should be considered as discussing a microstate of the system.

Initially the spring is undisturbed and has zero\emph{ }SI restoring spring
force
\[
F_{\text{s}}=-\partial\mathcal{H}/\partial x.
\]
The total force
\[
F_{\text{t}}=F_{0}+F_{\text{s}}%
\]
is the force \emph{imbalance} $F_{\text{t}}\lessgtr0$. There is no mechanical
equilibrium unless $F_{\text{t}}=0$ and the spring continues to stretch or
contract, thereby giving rise to an \emph{oscillatory} motion that will go on
for ever. During each oscillation, $F_{\text{t}}$\ is almost always nonzero.
\ The\emph{ }SI work done by $F_{\text{s}}$ is the spring\emph{ }work
$dW\doteq F_{\text{s}}dx$ performed by the spring (internal approach), while
the work performed by $F_{0}$ is $d\widetilde{W}=F_{0}dx$ \emph{transferred}
to the spring; its negative $d_{\text{e}}W=-F_{0}dx$ is the exchange work
(external approach). The kinetic energy plays no role in determining work and
is not considered. Being a purely mechanical example, there is no dissipation.
Despite this, we can introduce using the modern notation
\begin{equation}
d_{\text{i}}W\doteq dW-d_{\text{e}}W\equiv dW+d\widetilde{W}\equiv
F_{\text{t}}dx, \label{diW-microstate}%
\end{equation}
which can be of \emph{either} sign (no second law here) and represents the
work done by $F_{\text{t}}$. Thus, $dW,d_{\text{e}}W=-d\widetilde{W}$ and
$d_{\text{i}}W$ represent \emph{different} works, \emph{a result that has
nothing to do with dissipation but only with the imbalance}; among these, only
the generalized work $dW$ is an SI work. The change in the Hamiltonian
$\mathcal{H}=E$ of the spring due to a variation in the work variable $x$ is
\[
\left.  d\mathcal{H}\right\vert _{\text{w}}=\left.  dE\right\vert _{\text{w}%
}=-Fdx=-dW\neq d\widetilde{W};
\]
the suffix w refers to the change caused by the performing work by varying $x$
here. We thus conclude that%
\begin{equation}
dW=-\left.  dE\right\vert _{\text{w}},d\widetilde{W}=\left.  dE\right\vert
_{\text{w}}+F_{\text{t}}dx, \label{Exclusive-Works}%
\end{equation}
which shows the importance of the force imbalance $F_{\text{t}}$ and also
shows that $dW\neq-d\widetilde{W}$ almost always.

\subsubsection{Inclusive Approach}

Let us consider the \emph{inclusive} Hamiltonian \cite{Jarzynski}
$\mathcal{H}^{\prime}=E^{\prime}\doteq E-F_{0}x$ used in deriving Eq.
(\ref{JarzynskiRelation}); it also explains the prime there. We have
\[
dE^{\prime}=dE-d(F_{0}x)=-F_{\text{t}}dx-xdF_{0}.
\]
As the force $F_{\text{t}}=-\partial E^{\prime}/\partial x$ conjugate to $x$
does not identically vanish, $E^{\prime}(x,F_{0})$ is a function of \emph{two}
work parameters $x$ and $F_{0}$. However, Jarzynski, see Conclusion
\ref{Conclusion-Jarzynski-InclusiveApproach0}, neglects $F_{\text{t}}$ so
$E^{\prime}(F_{0})$ becomes a function of only $F_{0}$
\cite{Gujrati-jarzynski-SecondLaw}. We will see below that the existence of
$F_{\text{t}}$ is very common. As $x=-\partial E^{\prime}/\partial F_{0}$, $x$
is the generalized force conjugate to $F_{0}$. The SI work $dW^{\prime}$
consists of two contributions :
\[
dW^{\prime}=dW_{x}^{\prime}+dW_{F_{0}}^{\prime}=F_{\text{t}}dx+xdF_{0}%
\]
and satisfies
\begin{equation}
dW^{\prime}=-\left.  dE\right\vert _{w}, \label{Inclusive-Work-System}%
\end{equation}
as in the exclusive\emph{ }approach. Furthermore, $dW_{F_{0}}^{\prime}%
=xdF_{0}$ represents the external work
\[
d_{\text{e}}W^{\prime}=dW_{F_{0}}^{\prime}=xdF_{0};
\]
hence,
\[
d_{\text{i}}W^{\prime}=dW_{x}^{\prime}=F_{\text{t}}dx
\]
represents the internal work due to $F_{\text{t}}$, and which appears in the
left side of Eq. (\ref{JarzynskiRelation}).

The following identities are always satisfied, whether we consider a
mechanical system as in this subsection or a thermodynamic system as in the
next subsection:
\begin{subequations}
\label{Inclusive-Exclusive-Change}%
\begin{align}
\left.  dE^{\prime}\right\vert _{\text{w}}-\left.  dE\right\vert _{\text{w}}
&  \doteq dW-dW^{\prime}\equiv-d(F_{0}x),\label{Inclusive-Exclusive-Change-dW}%
\\
d\widetilde{W}^{\prime}-d\widetilde{W}  &  \doteq d_{\text{e}}W-d_{\text{e}%
}W^{\prime}\equiv-d(F_{0}x).\label{Inclusive-Exclusive-Change-dW-tide}\\
d_{\text{i}}W^{\prime}  &  \equiv d_{\text{i}}W\equiv F_{\text{t}}dx.
\label{Inclusive-Exclusive-Change-diW}%
\end{align}
For a mechanical system like a microstate $\mathfrak{m}_{k}$, we should append
a subscript $k$ to each of the quantities in Eq.
(\ref{Inclusive-Exclusive-Change}). For a thermodynamic system, each of the
quantities also refer to thermodynamic average quantities.

Let us investigate the case $F_{0}\equiv0$: $F_{\text{t}}=F_{\text{s}}$, and
$dW=dW^{\prime}=F_{\text{s}}dx\neq0$ and $d\widetilde{W}=d\widetilde
{W}^{\prime}=0$ as a consequence of $\mathcal{H}=\mathcal{H}^{\prime}$. Such a
situation arises when the spring, which is initially kept locked in a
compressed (or elongated) state is unlocked to let go without applying any
external force. Here, $dW=dW^{\prime}\neq0$, while $d\widetilde{W}%
=d\widetilde{W}^{\prime}=0$ as the spring expands (or contracts) under the
influence of its spring force $F_{\text{s}}$. The reader should notice a
similarity with the free expansion noted above.

\textbf{Comparison with Jarzynski's Approach: }The difference between our
approach and that by Jarzynski should be mentioned. As said earlier, Jarzynski
does not allow the contribution $dW_{x}^{\prime}=F_{\text{t}}dx$ in
$dE^{\prime}$ as he treats $E^{\prime}(F_{0})$ as a function of a single work
parameter $F_{0}$. Therefore,
\end{subequations}
\begin{equation}
d_{\text{i}}W^{\prime\text{(J)}}=0\text{ and }d\widetilde{W^{\prime}%
}^{\text{(J)}}=-xdF_{0}=\left.  dE^{\prime}\right\vert _{\text{w}}%
^{\text{(J)}}; \label{JarzynskiWorkEnergy-Mechanical}%
\end{equation}
we have used a superscript (J) as a reminder for the results by Jarzynski.
Thus, as concluded in Sec. \ref{Sec-Exclusive-Inclusive-Hamiltonians},
Jarzynski's approach does not allow any force imbalance $F_{\text{t}}$ (see
Conclusion \ref{Conclusion-Jarzynski-InclusiveApproach0}) so his conclusion
$d\widetilde{W^{\prime}}^{\text{(J)}}=\left.  dE^{\prime}\right\vert
_{\text{w}}^{\text{(J)}}$ fits with the questionable identification
$d\widetilde{W^{\prime}}_{k}\overset{?}{=}\left.  dE_{k}^{\prime}\right\vert
_{\text{w}}$ whose validity requires $d_{\text{i}}W^{\prime}=0$, \textit{i.e.,
}$F_{\text{t}}=0$. As we will see in the next example, $F_{\text{t}}$ is not
only ubiquitous (it is present even in equilibrium) but also necessary for irreversibility.

\subsection{Force Imbalance in a Thermodynamic
System\label{Sec-ForceImbalance-Thermodynamic}}

\subsubsection{Exclusive Approach}

To incorporate dissipation, we consider a thermodynamic analog of the above
example by replacing the vacuum with a fluid in the example discussed above;
see Fig. \ref{Fig_Piston-Spring}(b). This example is no different from the one
shown in Fig. \ref{Fig_Piston-Spring}(a). We will actually discuss this
gas-piston system since we have already studied it earlier. This is easily
converted to the spring-mass system by a suitable change in the vocabulary as
we will elaborate below.

The piston is locked and the gas has a pressure $P$. We first focus on various
work averages to understand the form of dissipation. At time $t=0$, an
external pressure $P_{0}<P$ is applied on the piston and the lock on the
piston is released. We should formally make the substitution $x\rightarrow
V,F\rightarrow P$ (or $P_{k}$ when considering $\mathfrak{m}_{k}$) and
$F_{0}\rightarrow-P_{0}$. At the same time, we will also invoke $\xi$. The gas
expands $(dV\geq0)$ and $P\searrow P_{0}$. The SI work done by the gas is
$dW=PdV+Ad\xi$, while $d\widetilde{W}=-P_{0}dV=-d_{\text{e}}W$. The difference
$d_{\text{i}}W=(P-P_{0})dV+Ad\xi\geq0$ appears as the irreversible work that
is dissipated in the form of heat ($d_{\text{i}}Q\equiv d_{\text{i}}W$) either
due to friction between the moving piston and the cylinder or other
dissipative forces like the viscosity; including them does not change the
first two equations but supplements the meaning of $F_{\text{t}}$\ in the last
equation in Eq. (\ref{Inclusive-Exclusive-Change}); it must include all
possible force imbalance as clearly seen in Sec. \ref{Sec-Friction}. We will
see how internal processes such as friction give rise to internal variables.
Assuming this to be the case, our assumption of the presence of an internal
variable should not come as a surprise.

Let us analyze this model more carefully at a microstate level. Let
$\mathcal{H}(V,\xi)=\allowbreak E(V\xi)$ be the exclusive Hamiltonian of the
gas. Let $E_{k}(V,\xi)$ denote the energy in the inclusive approach of some
$\mathfrak{m}_{k}$; let $E(V,\xi)\doteq\left\langle E\right\rangle $ and
$P(V,\xi)\doteq\left\langle P\right\rangle $ be their statistical average over
microstates. We use Eq. (\ref{E_k-partition}), and use Eq.
(\ref{Works-Microstates}) to identify
\[
dW_{k}\doteq-dE_{k},d_{\text{e}}W_{k}\doteq-d_{\text{e}}E_{k},d_{\text{i}%
}W_{k}\doteq-d_{\text{i}}E_{k},
\]
giving the three work-Hamiltonian relations for a microstate. The statistical
averaging gives
\begin{align*}
dW  &  \equiv-\left.  dE\right\vert _{\text{w}}=PdV+Ad\xi,\\
d_{\text{e}}W  &  \equiv-\left.  d_{\text{e}}E\right\vert _{\text{w}}%
=P_{0}dV\\
d_{\text{i}}W  &  \equiv-\left.  d_{\text{i}}E\right\vert _{\text{w}}%
=(P-P_{0})dV+Ad\xi
\end{align*}
just as discussed above; the suffix w refers to these averages determined by
the work variables $V$ and $\xi$.

There is no reason for $P_{k}=P_{0},A_{k}=0,\forall k$, even in a reversible
process for which $P=P_{0}$ and $A_{0}=0$. As there are pressure fluctuations
even in equilibrium,
\begin{equation}%
%TCIMACRO{\tsum \nolimits_{k}}%
%BeginExpansion
{\textstyle\sum\nolimits_{k}}
%EndExpansion
p_{k}(P_{k}-P_{0})^{2}\geq0. \label{PressureFluctuations}%
\end{equation}
For the same reason, we expect affinity fluctuations also in equilibrium.
Thus, $P_{k}-P_{0}\neq0$ and $A_{k}\neq0$ in general; hence $d_{\text{i}}%
W_{k}\neq0$.

\subsubsection{Inclusive Approach}

The inclusive Hamiltonian for $\mathfrak{m}_{k}$ appears similar to the
microstate analog of the NEQ enthalpy \cite{Gujrati-II,Gujrati-III}
$E_{k}^{\prime}(V,\xi,P_{0},)=E_{k}(V,\xi)+P_{0}V$. It is a function of
$V,\xi$ and $P_{0}$ as was discussed in Sec.
\ref{Sec-Exclusive-Inclusive-Hamiltonians}. Therefore,
\begin{align*}
dW_{k}^{\prime}  &  =-\left(  \partial E_{k}^{\prime}/\partial V\right)
dV-\left(  \partial E_{k}^{\prime}/\partial\xi\right)  d\xi-\left(  \partial
E_{k}^{\prime}/\partial P_{0}\right)  dP_{0}\\
&  =(P_{k}-P_{0})dV+A_{k}d\xi-VdP_{0}\\
&  =P_{k}dV+A_{k}d\xi-d(P_{0}V),
\end{align*}
of this, $d_{\text{e}}W_{k}^{\prime}=-d\widetilde{W}_{k}^{\prime}=-VdP_{0}$ is
spent to overcome the external force $-V\doteq-\left(  \partial E_{k}^{\prime
}/\partial P_{0}\right)  $ conjugate to the work variable $P_{0}$ and the
balance is the irreversible work
\[
d_{\text{i}}W_{k}^{\prime}=(P_{k}-P_{0})dV+A_{k}d\xi=d_{\text{i}}W_{k}.
\]
Since $d\widetilde{W}_{k}=-P_{0}dV$, Eq. (\ref{Inclusive-Exclusive-Change})
remains satisfied for each microstate, and also for the averages.

As we have seen above that $d_{\text{i}}W_{k}\neq0$ in general. Thus, we come
to the very important conclusion

\begin{conclusion}
\label{Conclusion-nonzeo-diW} The presence of $d_{\text{i}}W_{k}\equiv
d_{\text{i}}W_{k}^{\prime}$ is ubiquitous and must be accounted for even in a
reversible process, let alone an irreversible process; see Eq.
(\ref{PressureFluctuations}). This clearly shows that $dW_{k}$ ($dW_{k}%
^{\prime}$) does not vanish even if $d\widetilde{W}_{k}=-d_{\text{i}}W_{k}$
($d\widetilde{W^{\prime}}=-d_{\text{i}}W_{k}^{\prime}$) vanishes.
\end{conclusion}

\begin{conclusion}
\label{Conclusion-dW_k-P_0} The microwork $dW_{k}$ or $dW_{k}^{\prime}$
continues to contribute to $dE_{k}$ (such as during $\overline{\mathcal{P}}$
in the Jarzynsky process) even if $d\widetilde{W}_{k}$ or $d\widetilde
{W^{\prime}}$ has ceased to exist (during $\overline{\mathcal{P}}$).
Therefore, $dW_{k}$ or $dW_{k}^{\prime}$ contributes over the entire process
$\mathcal{P}_{0}$.
\end{conclusion}

\begin{remark}
\label{Remark-DW_k-P_0}The evaluation of $\Delta W_{k}$ or $\Delta
W_{k}^{\prime}$ becomes extremely easy as we need to focus on the entire
process and do not have to consider the driving and reequilibration stages of
the process separately.$\ $As a consequence, we need to determine $\Delta
E_{k}$ between the final and initial states of $\mathcal{P}_{0}$. As $E_{k}$
is an SI-quantity, its value depend on the state, see Definition
\ref{Def-State}, so the value of $\Delta E_{k}$ does not depend on the actual
process $\mathcal{P}_{0}$. In other words, $\Delta W_{k}=-\Delta E_{k}$ is the
same for all possible processes $\mathcal{P}_{0}$ between the same two states.
This observation clarifies a very important Conclusion
\ref{Remark-General Process-DWk}\ obtained later.
\end{remark}

A third example is the spring-mass problem in Fig. \ref{Fig_Piston-Spring}(b),
where we consider the relative motion of the particle. This example can be
considered a prototype of a manipulated Brownian particle undergoing a
relative motion with respect to the rest of the system and is treated within
our internal approach in Sec. \ref{Sec-Friction}. As we will see, the
irreversible work done by the frictional force is properly accounted by the
generalized work and, in particular, the frictional work is part of
$d_{\text{i}}W$ or of $d_{\text{i}}W_{k}$.

\section{Emergence of Internal Variables in the
Hamiltonian\label{Sec-Friction}}

We wish to show that including other dissipation or internal variables does
not alter the first two equations of Eq. (\ref{Inclusive-Exclusive-Change}).
However, the third equation needs to incorporate additional contributions due
to new forms of dissipation such as new internal variables. The discussion
also shows how the Hamiltonian becomes dependent on internal variables, and
how the system is maintained \emph{stationary} despite motion of its parts.

\subsection{Piston-Gas System}

We consider the second example, which is depicted in Fig.
\ref{Fig_Piston-Spring}(a), for this exercise. To describe dissipation, we
need to treat the motion of the piston by including its momentum
$\mathbf{P}_{\text{p}}$ in our discussion. The gas, the cylinder and the
piston constitute the system $\Sigma$. We have a gas of mass $M_{\text{g}}$ in
the cylindrical volume $V_{\text{g}}$, the piston of mass $M_{\text{p}}$, and
the rigid cylinder (with its end opposite to the piston closed) of mass
$M_{\text{c}}$. However, we will consider the composite subsystem
$\Sigma_{\text{gc}}=\Sigma_{\text{g}}\cup\Sigma_{\text{c}}$ so that with
$\Sigma_{\text{p}}$ it makes up $\Sigma$. The Hamiltonian $\mathcal{H}$ of the
system is the sum of $\mathcal{H}_{\text{gc}}$ of the gas and cylinder,
$\mathcal{H}_{\text{p}}$ of the piston, the interaction Hamiltonian
$\mathcal{H}_{\text{int}}$ between the two subsystems $\Sigma_{\text{gc}}%
\ $and $\Sigma_{\text{p}}$, and the stochastic interaction Hamiltonian
$\mathcal{H}_{\text{stoc}}$ between $\Sigma$ and $\widetilde{\Sigma}$. As is
customary, we will neglect $\mathcal{H}_{\text{stoc}}$ here. We assume that
the centers-of-mass of $\Sigma_{\text{gc}}$ and $\Sigma_{\text{p}}$ are moving
with respect to the medium with linear momentum $\mathbf{P}_{\text{gc}}$ and
$\mathbf{P}_{\text{p}}$, respectively. We do not allow any rotation for
simplicity. We assume that
\begin{equation}
\mathbf{P}_{\text{gc}}+\mathbf{P}_{\text{p}}%
=0,\label{Stationary_Momentum_Condition}%
\end{equation}
so that $\Sigma$ is at rest with respect to the medium. Thus,\
\[
\mathcal{H}(\left.  \mathbf{x}\right\vert V,\mathbf{P}_{\text{gc}}%
,\mathbf{P}_{\text{p}})=%
%TCIMACRO{\tsum \nolimits_{\lambda}}%
%BeginExpansion
{\textstyle\sum\nolimits_{\lambda}}
%EndExpansion
\mathcal{H}_{\lambda}(\left.  \mathbf{x}_{\lambda}\right\vert V_{\lambda
},\mathbf{P}_{\lambda})+\mathcal{H}_{\text{int}},
\]
where $\lambda=$gc,p, $\mathbf{x}_{\lambda}\mathbf{=(r}_{\lambda}%
\mathbf{,p}_{\lambda}\mathbf{)}$ denotes a point in the phase space
$\Gamma_{\lambda}$ of $\Sigma_{\lambda}$; $V_{\lambda\text{ }}$is the volume
of $\Sigma_{\mathbf{\lambda}}$, and $V=V_{\text{gc}}+V_{\text{p}}$ is the
volume of $\Sigma$. We do not exhibit the number of particles $N_{\text{g}%
},N_{\text{c}},N_{\text{p}}$ as we keep them fixed. We let $\mathbf{x}$
denotes the collection ($\mathbf{x}_{\text{g}},\mathbf{x}_{\text{c}%
},\mathbf{x}_{\text{p}}$). Thus, the system Hamiltonian $\mathcal{H}(\left.
\mathbf{x}\right\vert V,\mathbf{P}_{\text{gc}},\mathbf{P}_{\text{p}})$ and the
average energy $E$ depend on the parameters $V,\mathbf{P}_{\text{gc}%
},\mathbf{P}_{\text{p}}$. Accordingly, the system entropy, which we assume is
a state function, is written as $S(E,V,\mathbf{P}_{\text{gc}},\mathbf{P}%
_{\text{p}})$. Hence, the corresponding Gibbs fundamental relation becomes
\[
dS=\beta\lbrack dE+PdV-\mathbf{V}_{\text{gc}}\mathbf{\cdot}d\mathbf{P}%
_{\text{gc}}-\mathbf{V}_{\text{p}}\mathbf{\cdot}d\mathbf{P}_{\text{p}}],
\]
where we have used the conventional conjugate fields
\begin{equation}%
\begin{tabular}
[c]{c}%
$\beta\doteq\partial S/\partial E,\beta P\doteq\partial S/\partial V,,$\\
$\beta\mathbf{V}_{\text{gc}}\doteq-\partial S/\partial\mathbf{P}_{\text{gc}%
}\mathbf{,}\beta\mathbf{V}_{\text{p}}\doteq-\partial S/\partial\mathbf{P}%
_{\text{p}}$%
\end{tabular}
\ \ \ \ \ \ \ \ \ \ \label{Conjugate-Fields_Gas-Piston}%
\end{equation}
as shown elsewhere \cite[and references theirin]{Gujrati-II}. Using Eq.
(\ref{Stationary_Momentum_Condition}), we can rewrite this equation as%
\begin{equation}
dS=\beta\lbrack dE+PdV-\mathbf{V\cdot}d\mathbf{P}_{\text{p}}%
]\label{Gibbs-Fundamental_Gas-Piston}%
\end{equation}
in terms of the $\emph{relative}$ \emph{velocity}, also known as the
\emph{drift velocity }$\mathbf{V\doteq V}_{\text{p}}-\mathbf{V}_{\text{gc}}$
of the piston with respect to $\Sigma_{\text{gc}}$. We can cast the drift
velocity term as $\mathbf{V\cdot}d\mathbf{P}_{\text{p}}\equiv\mathbf{F}%
_{\text{p}}\mathbf{\cdot}d\mathbf{R}$, where $\mathbf{F}_{\text{p}}\doteq
d\mathbf{P}_{\text{p}}\mathbf{/}dt$ is the \emph{force}\ and $d\mathbf{R=V}dt$
is the \emph{relative displacement} of the piston. The first law applied to
the stationary $\Sigma$ gives
\begin{equation}
dE=T_{0}d_{\text{e}}S-P_{0}dV\label{First-Law-Piston}%
\end{equation}
in terms of the temperature $T_{0}$ and the pressure $P_{0}$ of the medium.

The internal motions of $\Sigma_{\text{gc}}$\ and $\Sigma_{\text{p}}$\ is not
controlled by any external agent so the relative motion described by the
relative displacement $\mathbf{R}$ represents an \emph{internal variable
}\cite{Kestin} so that the corresponding affinity $\mathbf{F}_{\text{p}0}=0$
for $\widetilde{\Sigma}$. Because of this, Eq. (\ref{First-Law-Piston}) does
not contain the relative displacement $\mathbf{R}$. We now support this claim
using our approach in the following. This also shows how $\mathcal{H}(\left.
\mathbf{x}\right\vert V,\mathbf{P}_{\text{gc}},\mathbf{P}_{\text{p}})$
develops a dependence on the internal variable $\mathbf{R}$. We manipulate
$dS$ in Eq. (\ref{Gibbs-Fundamental_Gas-Piston}) by using the above first law
for $dE$ so that
\[
TdS=T_{0}d_{\text{e}}S+(P-P_{0})dV-\mathbf{F}_{\text{p}}\mathbf{\cdot
}d\mathbf{R,}%
\]
which reduces to%
\[
T_{0}d_{\text{i}}S=(T_{0}-T)dS+(P-P_{0})dV-\mathbf{F}_{\text{p}}\mathbf{\cdot
}d\mathbf{R.}%
\]
This equation expresses the irreversible entropy generation as sum of three
distinct and independent irreversible entropy generations. To comply with the
second law, we conclude that for $T_{0}>0$,%
\begin{equation}
(T_{0}-T)dS\geq0,(P-P_{0})dV\geq0,\mathbf{F}_{\text{p}}\mathbf{\cdot
}d\mathbf{R\leq}0, \label{SecondLaw-Consequences}%
\end{equation}
which shows that each of the components of $d_{\text{i}}S$\ is nonnegative. In
equilibrium, each irreversible component vanishes, which happens when%
\begin{equation}
T\rightarrow T_{0},P\rightarrow P_{0}\text{, and }\mathbf{V}\rightarrow0\text{
or }\mathbf{F}_{\text{p}}\rightarrow0. \label{Equilibrium-Piston}%
\end{equation}
The inequality $\mathbf{F}_{\text{p}}\mathbf{\cdot}d\mathbf{R\leq}0$ shows
that $\mathbf{F}_{\text{p}}$ and $d\mathbf{R}$ are antiparallel, which is what
is expected of a \emph{frictional} force. This causes the piston to finally
come to rest. As $\mathbf{F}_{\text{p}}$ and $\mathbf{V}$ vanish together, we
can express this force as
\begin{equation}
\mathbf{F}_{\text{p}}=-\mu\mathbf{V}f(\mathbf{V}^{2}),
\label{Friction-GeneralForm}%
\end{equation}
where $\mu>0$ and $f$ is an \emph{even} function of $\mathbf{V}$. The medium
$\widetilde{\Sigma}$ is specified by $T=T_{0},P=P_{0}$ and $\mathbf{V}_{0}=0$
or $\mathbf{F}_{\text{p}}=0$. We will take $\mathbf{F}_{\text{p}}$ and
$d\mathbf{R}$ to be colinear and replace $\mathbf{F}_{\text{p}}\mathbf{\cdot
}d\mathbf{R}$ by $-F_{\text{f}}dx$ ($F_{\text{f}}dx\geq0$), where the
magnitude $F_{\text{p}}$ is written as $F_{\text{f}}$\ as a reminder that this
force is responsible for the frictional force and $dx$ is the magnitude of the
relative displacement $d\mathbf{R}$. The sign convention is that $F_{\text{f}%
}$ and increasing $x$ point in the same direction. We can invert Eq.
(\ref{Gibbs-Fundamental_Gas-Piston}) to obtain%
\begin{equation}
dE=TdS-PdV-F_{\text{f}}dx \label{Gibbs-Fundamental-Energy_Gas-Piston}%
\end{equation}
in which $dQ=TdS$ from our general result in Eq. (\ref{ClausiusEquality}).
Comparing the above equation with the first law in Eq. (\ref{FirstLaw-SI}), we
conclude that
\begin{equation}
dW=PdV+F_{\text{f}}dx. \label{Work-Friction}%
\end{equation}
The important point to note is that the friction term $F_{\text{f}}dx$
properly belongs to $dW$.\ As $d_{\text{e}}W=P_{0}dV$, we have%
\begin{equation}
d_{\text{i}}W=(P-P_{0})dV+F_{\text{f}}dx. \label{Irreversible_Work-Piston}%
\end{equation}
Both contributions in $d_{\text{i}}W$ are separately nonnegative. The
corresponding inclusive Hamiltonian is given by $\mathcal{H}^{\prime
}=\mathcal{H}+P_{0}V$. We can easily verify that the first two equations in
Eq. (\ref{Inclusive-Exclusive-Change}) in the main text remain valid with
$x\rightarrow V,F_{0}\rightarrow-P_{0}$ without any modification. The
right-side of the last equation, however, is modified to $\left(
P-P_{0}\right)  dV+F_{\text{f}}dx$ and now contains the internal variable.

We can determine the exchange heat $d_{\text{e}}Q=dQ-d_{\text{i}}W$%
\begin{equation}
d_{\text{e}}Q=TdS-(P-P_{0})dV-F_{\text{f}}dx \label{ExchangeHeat-Friction}%
\end{equation}

We can now consider a microstate $\mathfrak{m}_{k}$. For this we need to
consider $dE_{S}\equiv\left.  dE\right\vert _{\text{w}}$, from which we
determine
\[
dE_{k}=-dW_{k}=P_{k}dV+F_{\text{f}k}dx,
\]
where $F_{\text{f}k}$ is frictional force associated with $\mathfrak{m}_{k}$.
As $d_{\text{e}}E_{k}=-d_{\text{e}}W_{k}=-P_{0}dV$, we also conclude that%
\[
d_{\text{i}}E_{k}=-d_{\text{i}}W=-(P_{k}-P_{0})dV-F_{\text{f}k}dx.
\]
\ 

It should be emphasized that in the above discussion, we have not considered
any other internal motion such as between different parts of the gas besides
the relative motion between $\Sigma_{\text{gc}}$ and $\Sigma_{\text{p}}$.
These internal motions within $\Sigma_{\text{g}}$ can be considered by
following the approach outlined elsewhere \cite{Gujrati-II}. We will not
consider such a complication here.

\subsection{Particle-Spring-Fluid System}

It should be evident that by treating the piston as a mesoscopic particle such
as a pollen or a colloid, we can treat its thermodynamics using the above
procedure. This allows us to finally make a connection with the system
depicted in Fig. \ref{Fig_Piston-Spring}(b) in which the particle (a pollen or
a colloid) is manipulated by an external force $F_{0}$. We need to also
consider two additional forces $F_{\text{s}}$ and $F_{\text{f}}$, both
pointing in the same direction as increasing $x$; the latter is the frictional
force induced by the presence of the fluid in which the particle is moving
around. The analog of Eq. (\ref{Irreversible_Work-Piston}) for this case
becomes%
\begin{equation}
d_{\text{i}}W=(F_{\text{s}}+F_{0})dx+F_{\text{f}}dx=F_{\text{t}}dx,
\label{diW-Particle-Spring}%
\end{equation}
where $F_{\text{t}}=F_{\text{s}}+F_{0}+F_{\text{f}}$. The other two works are
$dW=(F_{\text{s}}+F_{\text{f}})dx$ and $d\widetilde{W}=F_{0}dx=-d_{\text{e}}%
W$. In EQ, $F_{\text{f}}=0$ and $F_{\text{s}}+F_{0}=0$ ($F_{0}\neq0$) to
ensure $d_{\text{i}}W=0$. In this case, $d\widetilde{W}=-dW=F_{0}dx=dE_{S}$,
but this will not be true for a NEQ state.

\subsection{Particle-Fluid System}

In the absence of a spring in the previous subsection, we must set
$F_{\text{s}}=0$ so
\begin{equation}
dW=F_{\text{f}}dx,d\widetilde{W}=F_{0}dx=-d_{\text{e}}W,d_{\text{i}}%
W=(F_{0}+F_{\text{f}})dx. \label{diW-Particle-Fluid}%
\end{equation}
In EQ, $F_{0}+F_{\text{f}}=0$ so that $F_{\text{f}}=-F_{0}$. This means that
in EQ, the particle's nonzero terminal velocity is determined by $F_{0}$ as
expected. In this case, $d\widetilde{W}=-dW=F_{0}dx=dE_{S}$, but this will not
be true for a NEQ state.

\section{Theoretical Consideration\label{Sec-Theory}}

In this section, we allow not just $V$ and $\xi$ as the two work variables in
the exclusive Hamiltonian $\mathcal{H}$ but an arbitrary number of work
variables $\mathbf{W}$, out of which we single out one special work variable
$V$ that is used to define the exclusive Hamiltonian $\mathcal{H}^{\prime
}=\mathcal{H}+P_{0}V$. (One can construct NEQ Legendre transform with more
than one variable, but we will not consider such a complication here.) The
lesson from Conclusion \ref{Conclusion-nonzeo-diW} is that $d_{\text{i}}%
W_{k}=d_{\text{i}}W_{k}^{\prime}$ does not normally vanish. It is this
Conclusion that forms a central core of the theoretical development in this
section along with the concept of SI-quantities. The importance of both these
ideas have not been appreciated to the best of our knowledge but will play a
central role in our discussion below.

\subsection{Thermodynamic Work-Energy Principle}

We first consider the exclusive Hamiltonian $\mathcal{H}$ and consider its
change $\left.  d\mathcal{H}\right\vert _{\mathbf{w}}$ due to work variables
\begin{equation}
\left.  d\mathcal{H}\right\vert _{w}=(\partial\mathcal{H}/\partial
\mathbf{W})\cdot d\mathbf{W}=-\mathbf{F}\cdot d\mathbf{W,}
\label{EnergyChange-W}%
\end{equation}
due to all work variables in $\mathbf{W}$; here%
\[
\mathbf{F\doteq}-\partial\mathcal{H}/\partial\mathbf{W}%
\]
is the generalized force, from which it follows that the SI-work is%
\begin{equation}
dW\doteq\mathbf{F}\cdot d\mathbf{W=-}\left.  d\mathcal{H}\right\vert _{w}.
\label{GeneralWork-Energy-Relation}%
\end{equation}
It is important to recall that $d\mathcal{H}_{\mathbf{W}}$ is the change in
$\mathcal{H}$ at fixed $\mathbf{W}$ and should not be confused with the change
$\left.  d\mathcal{H}\right\vert _{w}$ due to changes in the work variables in
$\mathbf{W}$. Indeed, $\left.  d\mathcal{H}\right\vert _{w}$ $\equiv
d\mathcal{H}_{S}$, which follows from Eq. (\ref{dE_isentropic}), but we use
the current notation $\left.  d\mathcal{H}\right\vert _{w}$ instead to
emphasize the role of work parameters in $\mathbf{W}$.

We first recognize that $\left.  d\mathcal{H}\right\vert _{w}$ represents the
change in a microstate energy $E_{k}$ so we will always think of $\left.
d\mathcal{H}\right\vert _{w}$ as $dE_{k}$ for some $\mathfrak{m}_{k}$.
Following Sec. \ref{Sec-Notation}, we break $\left.  d\mathcal{H}\right\vert
_{w}$\ into two parts, as we did for $dE_{k}$ for $\mathfrak{m}_{k}$ above:
\begin{equation}
\left.  d\mathcal{H}\right\vert _{\text{w}}=\left.  d_{\text{e}}%
\mathcal{H}\right\vert _{\text{w}}+\left.  d_{\text{i}}\mathcal{H}\right\vert
_{\text{w}}, \label{Hamiltonian-Partition}%
\end{equation}
and apply it to some $\mathfrak{m}_{k}$ so that $\left.  d_{\text{e}%
}\mathcal{H}\right\vert _{\text{w}}=d_{\text{e}}E_{k}\doteq-d_{\text{e}}%
W_{k}\equiv d\widetilde{W}_{k}$ is due to the exchange work $d\widetilde
{W}_{k}$ and $\left.  d_{\text{i}}\mathcal{H}\right\vert _{\text{w}%
}=d_{\text{i}}E_{k}\doteq-d_{\text{i}}W_{k}$ is due to the internal work
$d_{\text{i}}W_{k}$ performed by the generalized force $\mathbf{F}$ imbalance
on $\mathfrak{m}_{k}$. We now make the following claim in the form of a

\begin{theorem}
\textbf{ \label{Theorem-Work-Energy-Principle}}$\mathbf{Thermodynamic}$
$\mathbf{Work}$\textbf{-}$\mathbf{Energy}$ $\mathbf{Principle}$ \emph{The
change }$\left.  d\mathcal{H}\right\vert _{w}=dE_{k}$\emph{ in the Hamiltonian
$\mathcal{H}$ due to work only must be identified with the SI-work }$dW_{k}%
$\emph{ and not with }$d\widetilde{W}_{k}$ for $\mathfrak{m}_{k}$; see Eq.
(\ref{GeneralWork-Energy-Relation}). \emph{It} \emph{has two contributions;
see Eq. (\ref{Hamiltonian-Partition}). The first one corresponds to the
external work }$d\widetilde{W}_{k}=-d_{\text{e}}W_{k}$\emph{ \textit{performed
by the medium on }}$\mathfrak{m}_{k}$\ \emph{and the second one to the
negative internal work }$-d_{\text{i}}W_{k}$ \emph{due to the imbalance in the
generalized forces. After\ statistical averaging, }%
\[
\left.  d_{\text{i}}E\right\vert _{\text{w}}\doteq\left\langle \left.
d_{\text{i}}\mathcal{H}\right\vert _{\text{w}}\right\rangle \equiv
-d_{\text{i}}W\doteq-%
%TCIMACRO{\tsum \nolimits_{k}}%
%BeginExpansion
{\textstyle\sum\nolimits_{k}}
%EndExpansion
p_{k}d_{\text{i}}W_{k}\leq0
\]
\emph{ results in dissipation in the system. }
\end{theorem}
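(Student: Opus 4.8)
The plan is to assemble the three clauses of the statement from material already in hand: the generalized-force representation of microwork (Eq.~(\ref{GeneralWork-Energy-Relation})), Prigogine's additive splitting $d=d_{\text{e}}+d_{\text{i}}$ (Eq.~(\ref{Differentiation-Additive})), and the second law (Eq.~(\ref{SecondLaw})). First I would fix the identification. Writing $\mathbf{F}\doteq-\partial\mathcal{H}/\partial\mathbf{W}$ for the generalized force and applying it to the $k$th microstate $\mathfrak{m}_{k}$, Eq.~(\ref{GeneralWork-Energy-Relation}) gives $\left.d\mathcal{H}\right\vert_{\text{w}}=dE_{k}=-\mathbf{F}_{k}\cdot d\mathbf{W}=-dW_{k}$, which is Eq.~(\ref{dW-dE}): the SI identification $dE_{k}=-dW_{k}$. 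To see that it is \emph{not} $d\widetilde{W}_{k}$, I would note from Eq.~(\ref{Works-Microstates-Exchange}) that the medium contributes only through the components of $\mathbf{W}$ whose conjugate \emph{medium} field is nonzero (for the volume this is $-P_{0}$, whereas for any internal variable $A_{0}=0$), so $d\widetilde{W}_{k}=d_{\text{e}}E_{k}$; hence $d\widetilde{W}_{k}=dE_{k}$ would force $d_{\text{i}}E_{k}=0$, contradicting the ubiquity of the force imbalance established in Conclusion~\ref{Conclusion-nonzeo-diW} and Eq.~(\ref{PressureFluctuations}).

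Next I would produce the two-contributions clause. Applying the linear operators $d_{\text{e}},d_{\text{i}}$ of Eq.~(\ref{Differentiation-Additive}) to $E_{k}$ yields Eq.~(\ref{E_k-partition}), and to $\left.d\mathcal{H}\right\vert_{\text{w}}$ yields Eq.~(\ref{Hamiltonian-Partition}). By Eq.~(\ref{deW-deE-diW-diE}) the first piece is $\left.d_{\text{e}}\mathcal{H}\right\vert_{\text{w}}=d_{\text{e}}E_{k}=-d_{\text{e}}W_{k}=d\widetilde{W}_{k}$, the external work done by the medium on $\mathfrak{m}_{k}$, and the second is $\left.d_{\text{i}}\mathcal{H}\right\vert_{\text{w}}=d_{\text{i}}E_{k}=-d_{\text{i}}W_{k}=-(\mathbf{F}_{k}-\mathbf{F}_{0})\cdot d\mathbf{W}$, the negative internal work generated by the imbalance $\mathbf{F}_{k}-\mathbf{F}_{0}$ of the generalized forces (the exchange part carrying the medium forces $\mathbf{F}_{0}$, of which only the volume component survives), consistent with Eq.~(\ref{Works-Microstates-Exchange}).

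Finally the averaged inequality. Averaging Eq.~(\ref{Hamiltonian-Partition}) over $\{p_{k}\}$ and using $\langle\mathbf{F}\rangle=\mathbf{F}$ (the generalization of Eq.~(\ref{Average-P-A})) gives $\langle\left.d_{\text{i}}\mathcal{H}\right\vert_{\text{w}}\rangle=-\sum_{k}p_{k}d_{\text{i}}W_{k}=-d_{\text{i}}W=-(\mathbf{F}-\mathbf{F}_{0})\cdot d\mathbf{W}$. For the sign I would subtract the MI first law Eq.~(\ref{First-Law-deE}) from the SI first law Eq.~(\ref{FirstLaw-SI}) to get $d_{\text{i}}E=d_{\text{i}}Q-d_{\text{i}}W$, invoke $d_{\text{i}}E\equiv0$ from Eq.~(\ref{diE-EQ}) so that $d_{\text{i}}W=d_{\text{i}}Q$, and feed this together with $dQ=TdS$ from Eq.~(\ref{ClausiusEquality}) and $d_{\text{e}}Q=T_{0}d_{\text{e}}S$ into the entropy-production identity Eq.~(\ref{Irreversible EntropyGeneration-Complete}), which isolates $d_{\text{i}}W=(P-P_{0})dV+Ad\xi$ (more generally $(\mathbf{F}-\mathbf{F}_{0})\cdot d\mathbf{W}$). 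The second law, in the form Eq.~(\ref{SecondLaw-Consequences}), then gives $d_{\text{i}}W\geq0$, whence $\langle\left.d_{\text{i}}\mathcal{H}\right\vert_{\text{w}}\rangle=-d_{\text{i}}W\leq0$, the negative average internal energy change by work that we call dissipation. The main obstacle is precisely this last sign: it does not follow from $d_{\text{i}}S\geq0$ alone, since $Td_{\text{i}}S$ also carries the heat-exchange channel $\tfrac{(T_{0}-T)}{T_{0}}d_{\text{e}}Q$, and one must argue that the \emph{work} channel $d_{\text{i}}W$ is separately nonnegative. I would dispose of this, as in Eq.~(\ref{SecondLaw-Consequences}), by the standard hypothesis that the distinct irreversible entropy-generation mechanisms (heat exchange across a temperature gap, pressure imbalance, affinity, frictional drift) are mutually independent, so that each separately respects the second law; under that hypothesis the inequality, and with it the theorem, follows.
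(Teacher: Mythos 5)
Your proposal is correct and follows essentially the same route as the paper's own proof: the identification $dW_{k}=-dE_{k}$ via the generalized force, the exchange/internal split via the Prigogine operators, and the sign of $d_{\text{i}}W$ obtained by reducing to the entropy-production identity and invoking the separate nonnegativity of its mutually independent terms. Your explicit flagging of that independence step as a hypothesis (rather than a consequence of $d_{\text{i}}S\geq0$ alone) is, if anything, a more candid statement of the same argument the paper makes when it asserts that each term in $T_{0}d_{\text{i}}S=(T_{0}-T)dS+d_{\text{i}}W$ "being independent of each other, must be nonnegative separately."
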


\begin{proof}
Based on the three examples, and the general relation in Eq.
(\ref{EnergyChange-W}) or the earlier result in Eq. (\ref{dW-dE}), the proof
for
\[
\left.  d\mathcal{H}\right\vert _{\text{w}}=dE_{k}=-dW_{k}%
\]
is almost trivial. We only have to recognize that both $dE_{k}$ and $dW_{k}$
are SI-quantities as they appear in the identity in Eq. (\ref{EnergyChange-W}%
). The proof here is not restricted to only two work variables. We consider an
arbitrary number of work variables in $\mathbf{W}$. Consider the first law
representation using the MI-quantities in Eq. (\ref{FirstLaw-SI}). The
isentropic change $dE_{S}$ is identified with
\[
dE_{S}=-dW=-\mathbf{F}\cdot d\mathbf{W}.
\]
It represents the average of $-dW_{k}=dE_{k}$, with%
\[
dE_{k}\doteq(\partial E_{k}/\partial\mathbf{W})\cdot d\mathbf{W}%
\equiv\allowbreak d_{\text{e}}E_{k}+d_{\text{i}}E_{k}.
\]
We thus identify $d_{\text{e}}W_{k}=-d_{\text{e}}E_{k}$ and $d_{\text{i}}%
W_{k}=-d_{\text{i}}E_{k}$. This proves the first part. The isometric change in
$dE$ represents the generalized heat $dQ$, as discussed earlier. To proceed
further and better understand the above result and prove the last part, we
turn to the thermodynamics of the system, which we now require to be in
internal equilibrium so that $dQ=TdS$ \cite{Gujrati-Heat-Work,Gujrati-II}. We
rewrite $dE=TdS-dW$ as
\[
dE=T_{0}d_{\text{e}}S-d_{\text{e}}W+T_{0}d_{\text{i}}S+(T-T_{0})dS-d_{\text{i}%
}W,
\]
which leads to
\begin{equation}
T_{0}d_{\text{i}}S=(T_{0}-T)dS+d_{\text{i}}W, \label{diS-diW}%
\end{equation}
which is a different way to write Eq.
(\ref{Irreversible EntropyGeneration-Complete}). Each term on the right side,
being independent of each other, must be nonnegative separately to ensure the
second law in Eq. (\ref{SecondLaw}); compare with Eq.
(\ref{SecondLaw-Consequences}). This proves the last part
\begin{equation}
d_{\text{i}}W\geq0. \label{SecondLaw-IrriversibleWork}%
\end{equation}

\end{proof}

\subsection{Equivalence of Exclusive and Inclusive Approaches}

The discussion above can be extended to the inclusive energy. Indeed, all
relations derived in the exclusive approach have analogs in the inclusive
approach: all we need to do is to insert a prime on each quantity. It should
be stressed that
\[
-(\partial E_{k}/\partial\mathbf{W})\text{ or}-(\partial E_{k}^{\prime
}/\partial\mathbf{W}^{\prime}),\mathbf{W}^{\prime}=(\mathbf{W},P_{0})
\]
represents the "generalized force" and the work has the conventional
form:\ "force"$\times$"distance," contrary to what is commonly stated.
According to the theorem, only SI-works ($-dW_{k},-\Delta W_{k},-dW,-\Delta
W$, and $-\Delta W_{k}^{\prime}$) must be used on the left sides in Eq.
(\ref{MistakenIdentity}) or its analogs and as shown in Eq.
(\ref{Works-Microstates-total}) and its cumulative analog. It is easy to
convince that the Eq. (\ref{Inclusive-Exclusive-Change}) holds in all cases;
see the discussion in Sec. \ref{Sec-Exclusive-Inclusive-Hamiltonians}.

\section{A New NEQ Work Theorem\label{Sec-WorkTheorem}}

In a quantum system, the index $k$ denotes a set of quantum numbers, which we
take to be a \emph{finite} set, \textit{i.e., }having a finite number of
quantum numbers so that the set $\left\{  \mathfrak{m}_{k}\right\}  $ is
countable infinite. Such is the case for a particle in a box treated quantum
mechanically that we investigate in Sec. \ref{Sec-FreeExpansion}. The energy
$E_{k}(\mathbf{W})$ of a microstate $\mathfrak{m}_{k}$ depends on the work
parameter $\mathbf{W}$ so it changes as $\mathbf{W}$ changes during a process
$\mathcal{P}_{0}$ according to Eq. (\ref{EnergyChange-W}) but the set $k$ does
not change. This means that $\mathfrak{m}_{k}$ keeps its identity, while its
energy changes during the work protocol $\mathcal{P}_{0}$. The change of
energy during this protocol is related to the microwork
\begin{equation}
\Delta W_{k}=-(E_{k\text{fin}}-E_{k\text{in}}), \label{Macrowork-Ek}%
\end{equation}
where we have set
\[
E_{k\text{fin}}\doteq E_{k}(\tau),E_{k\text{in}}\doteq E_{k}\left(  0\right)
;
\]
here, $\tau$\ is the duration of the process $\mathcal{P}_{0}$. As we
discussed in reference to Eq. (\ref{Energy Differential}), the change $\Delta
E_{k}=E_{k\text{fin}}-E_{k\text{in}}$ occurs at fixed $p_{k}$ and equals
$-\Delta W_{k}$. Any change in $p_{k}$ requires the microheat $\Delta Q_{k}$,
\textit{i.e.}, either $\Delta_{\text{e}}Q_{k}$ or $\Delta_{\text{i}}Q_{k}$. We
finally conclude that

\begin{conclusion}
\label{Conclusion-Microstate-Evolution-dW} If we are interested in knowing the
cumulative change $\Delta W_{k}$, we only need to determine $\Delta E_{k}$ by
following the same $\mathfrak{m}_{k}$ mechanically during a work protocol
$\mathcal{P}_{0}$. The probability plays no role as $\Delta Q_{k}$ is of no concern.
\end{conclusion}

\begin{remark}
\label{Remark-General Process-DWk}It should be stated here that for all
different processes $\mathcal{P}_{0}(\mathsf{A}_{\text{fin}}\mid
\mathsf{A}_{\text{in}})$'s between the same two states \textsf{A}%
$_{\text{fin}}$ and \textsf{A}$_{\text{in}}$, not necessarily EQ states,
$E_{k\text{fin}}\ $and $E_{k\text{in}}$ are the same, see Definition
\ref{Def-State}, so
\begin{equation}
\Delta W_{k}(\mathcal{P}_{0})=-\Delta E_{k}(\mathsf{A}_{\text{fin}}%
\text{,}\mathsf{A}_{\text{in}}),\text{ }\forall\mathcal{P}_{0}(\mathsf{A}%
_{\text{fin}}\mid\mathsf{A}_{\text{in}})\text{.} \label{Identical DW_k}%
\end{equation}
This allows us to determine the dissipated work%
\[
\Delta_{\text{i}}W_{k}(\mathcal{P}_{0})=-\Delta E_{k}(\mathsf{A}_{\text{fin}%
}\text{,}\mathsf{A}_{\text{in}})-\Delta_{\text{e}}W_{k}(\mathcal{P}_{0})
\]
in which $\Delta_{\text{e}}W_{k}(\mathcal{P}_{0})=-\Delta\widetilde{W}%
_{k}(\mathcal{P}_{0})$ in terms of the external work $\Delta\widetilde{W}%
_{k}(\mathcal{P}_{0})$.
\end{remark}

What the above remark implies is the following: Different processes between
the same two states \textsf{A}$_{\text{fin}}$ and \textsf{A}$_{\text{in}}$
differ not in $\Delta W_{k}(\mathcal{P}_{0})$ but in $\Delta\widetilde{W}%
_{k}(\mathcal{P}_{0})$ or in $\Delta_{\text{i}}W_{k}(\mathcal{P}_{0})$. This
makes microwork unique in that it does not depend on the nature of
$\mathcal{P}_{0}(\mathsf{A}_{\text{fin}}\mid\mathsf{A}_{\text{in}})$. Despite
this, it contains the contribution of dissipation in it given by the average
$\left\langle \Delta_{\text{i}}W_{k}(\mathcal{P}_{0})\right\rangle $.

However, the property of a quantum $\mathfrak{m}_{k}$ maintaining its identity
during $\mathcal{P}_{0}$ is different from the property of a microstate
$\mathfrak{m}_{k}$ in a classical system, for which $\mathfrak{m}_{k}$ is a
small dimensionless volume element $\delta\mathbf{z}_{k}$ in the phase space
$\Gamma$ surrounding a point $\mathbf{z}_{k}$; the collection $\left\{
\delta\mathbf{z}_{k}\right\}  $ covers the entire phase space $\Gamma$.\ This
microstate changes its identity ($\delta\mathbf{z}_{k}\rightarrow
\delta\mathbf{z}_{l},k\neq l$) as it evolves in time following its Hamiltonian
dynamics; recall that $d\mathcal{H}_{\mathbf{W}}=0$ during this dynamics.
Despite this, the evolution is unique so there is a $1$-to-$1$
mapping\ between $\mathfrak{m}_{k}(t)\doteq\delta\mathbf{z}_{k}(t)$ and
$\mathfrak{m}_{k}(t^{\prime})\doteq\delta\mathbf{z}_{l}(t^{\prime})$. This
causes no problem as the change $\Delta W_{k}=-\Delta E_{k}$ is not affected
by the microstate evolution past $\tau$, the duration of $\mathcal{P}_{0}$;
see Conclusions \ref{Conclusion-MicroWorkHeat} and
\ref{Conclusion-Microheat-dE} for more details. In this case, introducing%
\[
E_{k}\left(  \tau\right)  =E(\delta\mathbf{z}_{k}(\tau)),E_{k}\left(
0\right)  =E(\delta\mathbf{z}_{k}(0),
\]
we can write $\Delta W_{k}$ as in Eq. (\ref{Macrowork-Ek}). Thus, whether we
are considering a classical system or a quantum system, we can always express
$\Delta W_{k}$ as in Eq. (\ref{Macrowork-Ek}).

\subsection{Derivation}

Let us evaluate the particular average $\left\langle \cdot\right\rangle _{0}$
in Eq. (\ref{JarzynskiRelation}) but of $e^{\beta_{0}\Delta W_{k}}$ using Eq.
(\ref{Macrowork-Ek}). We first consider the exclusive approach. We have
\begin{align*}
\left\langle e^{\beta_{0}\Delta W}\right\rangle _{0}  &  \doteq%
%TCIMACRO{\tsum \limits_{k}}%
%BeginExpansion
{\textstyle\sum\limits_{k}}
%EndExpansion
\frac{e^{-\beta_{0}E_{k,\text{i}}}}{Z_{\text{in}}(\beta_{0})}e^{\beta
_{0}\Delta W_{k}}\\
&  =%
%TCIMACRO{\tsum \limits_{k}}%
%BeginExpansion
{\textstyle\sum\limits_{k}}
%EndExpansion
\frac{e^{-\beta_{0}E_{k,\text{i}}}}{Z_{\text{in}}(\beta_{0})}e^{--\beta
_{0}(E_{k\text{fin}}-E_{k\text{in}})},
\end{align*}
which leads to%
\[
\left\langle e^{\beta_{0}\Delta W}\right\rangle _{0}=%
%TCIMACRO{\tsum \limits_{k}}%
%BeginExpansion
{\textstyle\sum\limits_{k}}
%EndExpansion
\frac{e^{-\beta_{0}E_{k,\text{fin}}}}{Z_{\text{in}}(\beta_{0})}=\frac
{Z_{\text{fin}}(\beta_{0})}{Z_{\text{in}}(\beta_{0})},
\]
where $Z_{\text{in}}(\beta_{0})$ and $Z_{\text{fin}}(\beta_{0})$ are initial
and final canonical partition functions. Introducing the free energy
difference $\Delta F\doteq F_{\text{fin}}-F_{\text{in}}$, we finally have
\begin{equation}
\left\langle e^{\beta_{0}\Delta W}\right\rangle _{0}=e^{-\beta_{0}\Delta F}.
\label{GujratiRelation}%
\end{equation}
This is our new work theorem involving microworks. The same calculation can be
carried out for the inclusive Hamiltonian, with a similar result except all
the quantities must be replaced by their prime analog as said earlier:%
\begin{equation}
\left\langle e^{\beta_{0}\Delta W^{\prime}}\right\rangle _{0}=e^{-\beta
_{0}\Delta F^{\prime}}. \label{GujratiRelation0}%
\end{equation}
This fixes the original Jarzynski relation in Eq. (\ref{JarzynskiRelation}) by
accounting irreversibility in both approaches, and is valid in all cases,
reversible or otherwise.

\section{The Free Expansion\label{Sec-FreeExpansion}\textsc{ }}

Consider the case of a free expansion ($P_{0}=0$) of a gas in an
\emph{isolated }system of volume $V_{\text{fin}}$, divided by an impenetrable
partition into the left (L) and the right (R) chambers as shown in Fig.
\ref{Fig_Expansion}(a). Initially, all the $N$ particles are in the left
chamber of volume $V_{\text{in}}$ in an equilibrium state at temperature
$T_{0}$; there is vacuum in the right chamber. At time $t=0$, the partition is
suddenly removed, shown by the broken partition in Fig. \ref{Fig_Expansion}(b)
and the gas is allowed to undergo \emph{free expansion} to the final volume
$V_{\text{fin}}$ during $\mathcal{P}$. After the free expansion, the gas is in
a NEQ state and is brought in contact with $\widetilde{\Sigma}_{\text{h}}$
during $\overline{\mathcal{P}}$ to come to equilibrium at the initial
temperature $T_{0}$. If the gas is ideal, there is no need to bring in
$\widetilde{\Sigma}_{\text{h}}$ for reequilibration; we can let the gas come
to equilibrium by itself as it is well known that the temperature of the
equilibrated gas after free expansion is also $T_{0}$.%
%TCIMACRO{\FRAME{ftbpFU}{3.0208in}{2.2303in}{0pt}{\Qcb{Free expansion of a gas.
%The gas is confined to the left chamber, which is separated by a partition
%(shown by a solid black vertical line) from the vacuum as shown in (a). At
%time $t=0$, the partition is removed abruptly as shown by the broken line in
%its original place in (b). The gas expands in the empty space on the right but
%the expansion is gradual as shown by the solid front, which separates it from
%the vacuum on its right. }}{\Qlb{Fig_Expansion}}{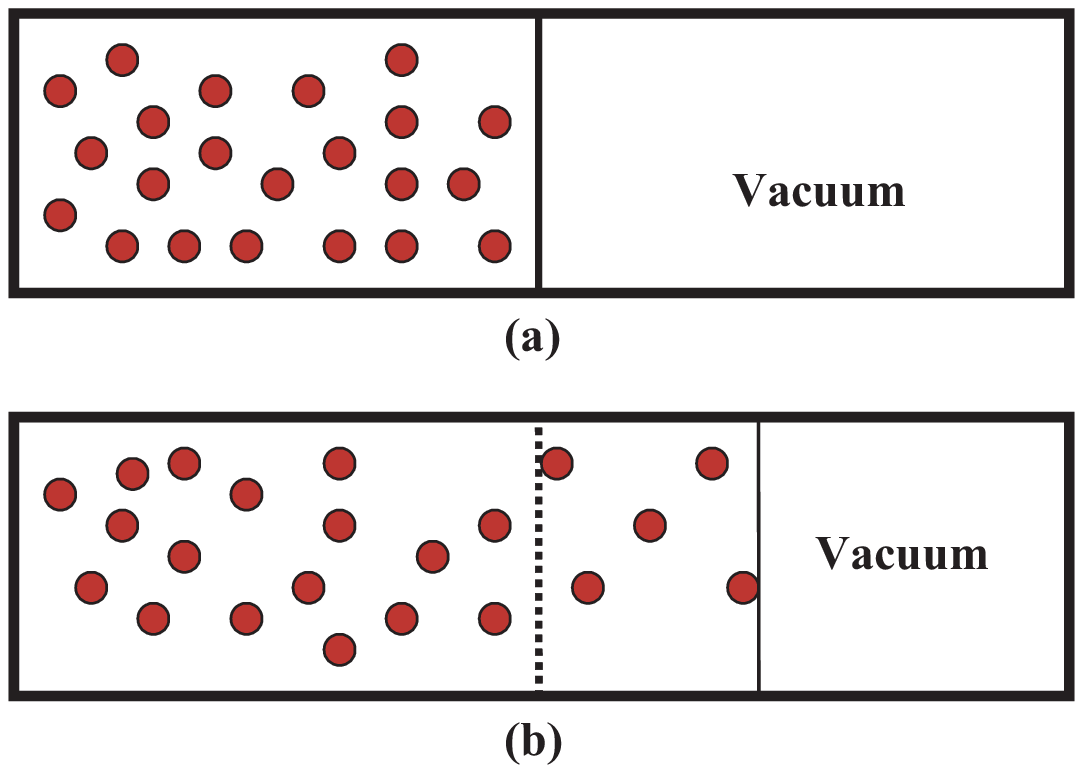}%
%{\special{ language "Scientific Word";  type "GRAPHIC";
%maintain-aspect-ratio TRUE;  display "USEDEF";  valid_file "F";
%width 3.0208in;  height 2.2303in;  depth 0pt;  original-width 4.2756in;
%original-height 3.1453in;  cropleft "0";  croptop "1";  cropright "1";
%cropbottom "0";  filename 'FreeExpansionBoxNew.eps';file-properties "XNPEU";}}
%}%
%BeginExpansion
\begin{figure}
[ptb]
\begin{center}
\includegraphics[
height=2.2303in,
width=3.0208in
]%
{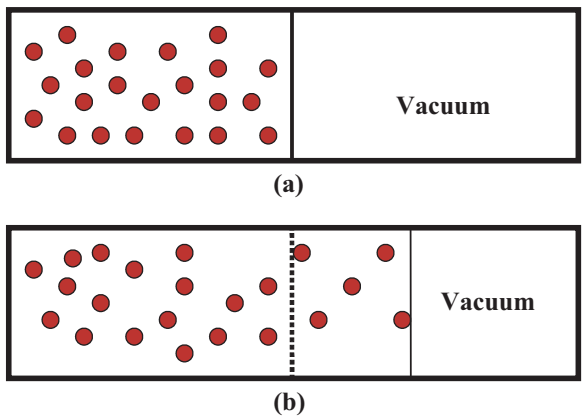}%
\caption{Free expansion of a gas. The gas is confined to the left chamber,
which is separated by a partition (shown by a solid black vertical line) from
the vacuum as shown in (a). At time $t=0$, the partition is removed abruptly
as shown by the broken line in its original place in (b). The gas expands in
the empty space on the right but the expansion is gradual as shown by the
solid front, which separates it from the vacuum on its right. }%
\label{Fig_Expansion}%
\end{center}
\end{figure}
%EndExpansion
It should be stated, which is also evident from Fig. \ref{Fig_Expansion}(b),
that while the removal of the partition can be instantaneous, the actual
process of gas expanding in the right chamber is continuous and gradually
fills it. Therefore, at each instant, it is possible to imagine a front of the
expanding gas shown by the solid vertical line enclosing the largest among
smallest possible volumes containing all the particles so that there are no
particles to the right of it in the right chamber in all possible realization
of the expanding gas. By this we mean the following: we consider all possible
realizations of the expanding gas at a particular time $t>0$ and locate the
front corresponding to the smallest volume containing all the gas particles to
its left. Then we choose among all these fronts that particular front that
results in the smallest volume on its right or the largest volume on its left.
In this sense, this front is an average concept and is shown in Fig.
\ref{Fig_Expansion}(b). We have identified the volume to its right as "vacuum"
in the figure. This means that at each instant when there is a vacuum to the
right of this front, the gas is expanding against zero pressure so that
$d\widetilde{W}=0$. Despite this, as the expansion is a NEQ process,
$dW=d_{\text{i}}W>0$.

\subsection{Quantum Free Expansion}

We now apply Eq. (\ref{GujratiRelation}) to the free expansion of a
one-dimensional \emph{ideal} gas of classical particles, but treated quantum
mechanically as a particle in a box with rigid walls, which has studied
earlier \cite{Gujrati-QuantumHeat}; see also Bender et al \cite{Bender}. We
assume that the gas is thermalized initially at some temperature
$T_{0}=1/\beta_{0}$ and then isolated from the medium so that the free
expansion occurs in an isolated system. After the free expansion from the box
size $L_{\text{in}}$ to $L_{\text{fin}}>L_{\text{in}}$, the box is again
thermalized at the same temperature $T_{0}$. The role of $V$ is played by the
length $L$ of the box. This will also set the stage for the classical
treatment later.

As discussed in Sec. \ref{Sec-Example-MechSystem} for $F_{0}=0$, $\Delta
W_{k}\neq0$ even though $\Delta\widetilde{W}_{k}=0$. Since we are dealing with
an ideal gas, we do not need to bring $\widetilde{\Sigma}_{\text{h}}$, see
below, so we let the gas to come to equilibrium as an isolated system. As
there is no inter-particle interaction, we can focus on a single particle for
our discussion; its energy levels are in appropriate units
\[
E_{k}=k^{2}/L^{2},
\]
where $L$ is the length of the box. We assume that the gas is thermalized
initially at some temperature $T_{0}=1/\beta_{0}$. It is isolated from the
medium so that the free expansion occurs in an isolated system, during which,
we have $\Delta_{\text{e}}Q=\Delta_{\text{e}}W=0$ (but $\Delta_{\text{i}%
}Q=\Delta_{\text{i}}W\neq0$) so that $\Delta E_{\text{free}}(L_{\text{fin}%
},L_{\text{in}})=0$; see Eq. (\ref{FirstLaw-MI}). After the free expansion
from the box size $L_{\text{in}}$ to $L_{\text{fin}}>L_{\text{in}}$, the box
is allowed to come to equilibrium in isolation so that we have $\Delta
E_{\text{reeq}}(L_{\text{fin}})=0$. Accordingly, $\Delta E_{\text{eq}%
}(L_{\text{fin}},L_{\text{in}})=0$ after reequilibration.

The initial partition function is given by
\[
Z_{\text{in}}(\beta_{0},L)=%
%TCIMACRO{\tsum \nolimits_{k}}%
%BeginExpansion
{\textstyle\sum\nolimits_{k}}
%EndExpansion
e^{\beta_{0}E_{k,\text{in}}}.
\]
Approximating the sum by an integration over $k$ as is common, we can evaluate
$Z_{\text{in}}(\beta_{0},L)$ from which we find that the free energy
$F_{\text{eq}}$ and the average energy $E_{\text{eq}}\ $are given by
\[
\beta_{0}F_{\text{eq}}=-(1/2)\ln(L^{2}\pi/4\beta_{0}),E_{\text{eq}}%
=1/2\beta_{0};
\]
while $F_{\text{eq}}$ depends on $\beta_{0}$ and $L$, $E_{\text{eq}}$ depends
only on $\beta_{0}$ but not on $L$ so that $E_{\text{eq}}$ has the same value
in the final EQ state. This means that the final equilibrium state has the
same temperature $T_{0}$. This explains why we did not need to bring
$\widetilde{\Sigma}_{\text{h}}$ in play for reequilibration as assumed above.\ 

As we have discussed in reference to Eq. (\ref{Energy Differential}) and
concluded in Conclusions \ref{Conclusion-MicroWorkHeat},
\ref{Conclusion-Microheat-dE} and \ref{Conclusion-Microstate-Evolution-dW},
and summarized in Remark \ref{Remark-General Process-DWk}, $\Delta E_{k}=-$
$\Delta W_{k}$ regardless of whether $\mathcal{P}_{0}$\ is irreversible or
not. Below we will show that the present calculation here is dealing with an
irreversible $\mathcal{P}_{0}$. The energy change between two EQ states is
\[
\Delta E_{k}=k^{2}(1/L_{\text{fin}}^{2}-1/L_{\text{in}}^{2}).
\]
Let us determine the microwork done to take the initial microstate to the
final microstate by using the internal pressure%
\begin{equation}
P_{k}=-\partial E_{k}/\partial L=2E_{k}/L\neq0 \label{P_k-FreeExpansion}%
\end{equation}
in%
\begin{equation}
\Delta W_{k}=%
%TCIMACRO{\dint \nolimits_{L_{\text{in}}}^{L_{\text{fin}}}}%
%BeginExpansion
{\displaystyle\int\nolimits_{L_{\text{in}}}^{L_{\text{fin}}}}
%EndExpansion
P_{k}dL. \label{DW-FreeExpansion}%
\end{equation}
It is easy to see that this microwork is precisely equal to \ $-\Delta E_{k}$
in accordance with Theorem \ref{Theorem-Work-Energy-Principle}, as expected.
It is also evident from Eq. (\ref{P_k-FreeExpansion}) that for each $L$
between $L_{\text{in}}$ and $L_{\text{fin}}$,
\[
P=%
%TCIMACRO{\tsum \nolimits_{k}}%
%BeginExpansion
{\textstyle\sum\nolimits_{k}}
%EndExpansion
p_{k}P_{k}=2E/L\neq0.
\]
We can use this average pressure to calculate the thermodynamic work
\[
\Delta W=%
%TCIMACRO{\dint \nolimits_{L_{\text{in}}}^{L_{\text{fin}}}}%
%BeginExpansion
{\displaystyle\int\nolimits_{L_{\text{in}}}^{L_{\text{fin}}}}
%EndExpansion
PdL=2%
%TCIMACRO{\tsum \nolimits_{k}}%
%BeginExpansion
{\textstyle\sum\nolimits_{k}}
%EndExpansion%
%TCIMACRO{\dint \nolimits_{L_{\text{in}}}^{L_{\text{fin}}}}%
%BeginExpansion
{\displaystyle\int\nolimits_{L_{\text{in}}}^{L_{\text{fin}}}}
%EndExpansion
p_{k}E_{k}dL/L\neq0.
\]
as expected. As $\Delta E=0$, it means that $\Delta Q=\Delta W\neq0$, which
really means $\Delta_{\text{i}}Q=\Delta_{\text{i}}W\neq0$ in this case. This
establishes that the expansion we are studying is irreversible. This is also
evident from the observation that $P\neq P_{0}=0$.

Despite this, $\Delta W_{k}$ is always equal to the same $\left(  -\Delta
E_{k}\right)  $ regardless of the nature of irreversibility of $\mathcal{P}%
_{0}$, which is consistent with Conclusion
\ref{Conclusion-Microstate-Evolution-dW} and Remark
\ref{Remark-General Process-DWk}. The same $\Delta W_{k}$ will also apply to a
reversible $\mathcal{P}_{0}$ as we are considering the energy change between
two EQ states. The only difference is that now $\Delta Q=\Delta W\neq0$ will
mean $\Delta_{\text{e}}Q=\Delta_{\text{e}}W\neq0$. It is trivially seen that
Eq. (\ref{GujratiRelation}) is satisfied for all $\mathcal{P}_{0}$, not just
the free expansion.

As $P_{0}=0$, there is no difference between the exclusive Hamiltonian and the
inclusive Hamiltonian. Thus, the discussion above is also valid for the
inclusive Hamiltonian and Eq. (\ref{GujratiRelation0}) with $E_{k}^{\prime
}=E_{k}$ and $F^{\prime}=F$.
%TCIMACRO{\FRAME{ftbpFU}{3.4783in}{2.0825in}{0pt}{\Qcb{The evolution of a
%microstate $\QTR{bf}{z}_{0}\in\QTR{bf}{\Gamma}_{\text{in}},\QTR{bf}{z}\in
%\QTR{bf}{\Gamma}_{\text{fin}}\backslash\QTR{bf}{\Gamma}_{\text{in}}$ following
%microwork (green arrows) into $\QTR{bf}{z}_{\gamma}$ and $\QTR{bf}{z}_{\gamma
%}^{\prime}$, respectively. The initial and final phase spaces are
%$\QTR{bf}{\Gamma}_{\text{in}}$ and $\QTR{bf}{\Gamma}_{\text{fin}}$ shown by
%the interiors of the red ellipses.}}{\Qlb{Fig_PhasePoint-Evolution}%
%}{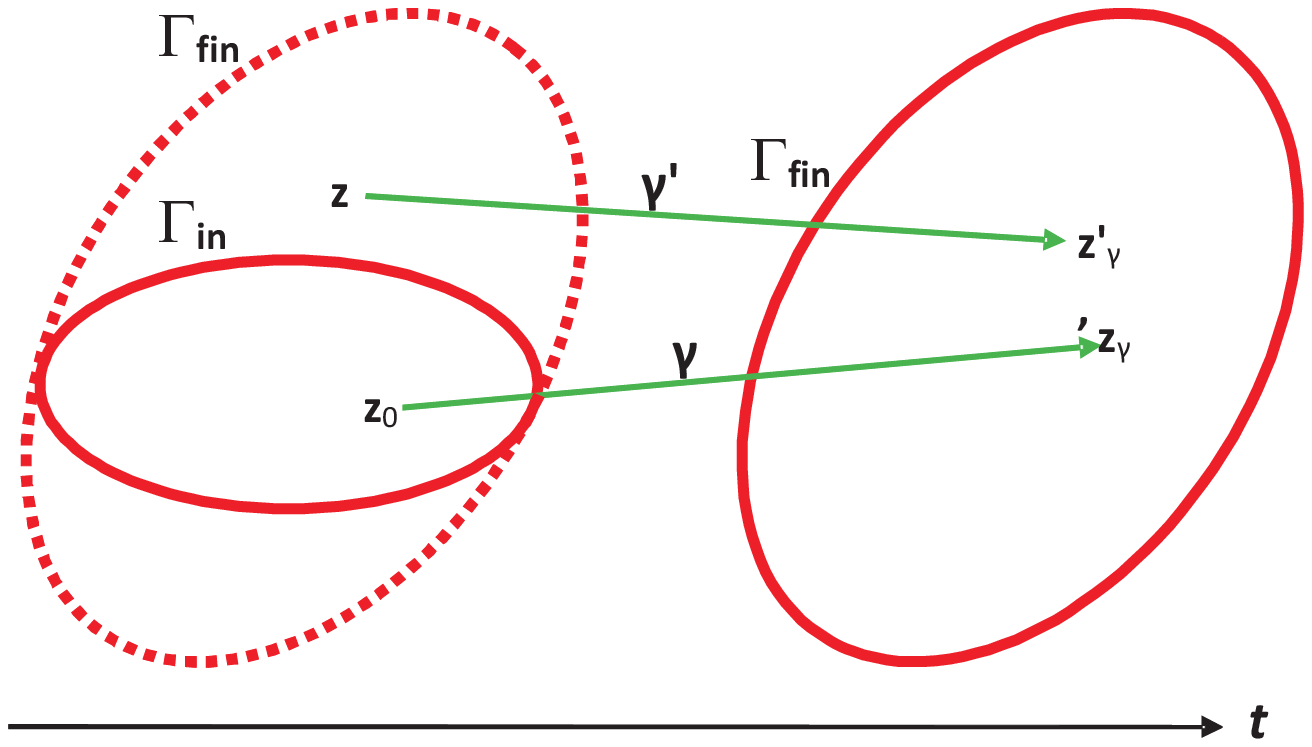}{\special{ language "Scientific Word";
%type "GRAPHIC";  maintain-aspect-ratio TRUE;  display "USEDEF";
%valid_file "F";  width 3.4783in;  height 2.0825in;  depth 0pt;
%original-width 5.3082in;  original-height 3.1609in;  cropleft "0";
%croptop "1";  cropright "1";  cropbottom "0";
%filename 'PhasePoint-Evolution-Reduced.eps';file-properties "XNPEU";}} }%
%BeginExpansion
\begin{figure}
[ptb]
\begin{center}
\includegraphics[
height=2.0825in,
width=3.4783in
]%
{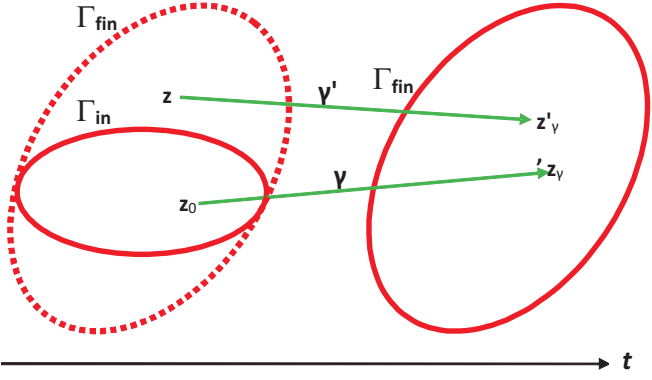}%
\caption{The evolution of a microstate $\mathbf{z}_{0}\in\mathbf{\Gamma
}_{\text{in}},\mathbf{z}\in\mathbf{\Gamma}_{\text{fin}}\backslash
\mathbf{\Gamma}_{\text{in}}$ following microwork (green arrows) into
$\mathbf{z}_{\gamma}$ and $\mathbf{z}_{\gamma}^{\prime}$, respectively. The
initial and final phase spaces are $\mathbf{\Gamma}_{\text{in}}$ and
$\mathbf{\Gamma}_{\text{fin}}$ shown by the interiors of the red ellipses.}%
\label{Fig_PhasePoint-Evolution}%
\end{center}
\end{figure}
%EndExpansion

\subsection{Classical Free Expansion}

We will consider the free expansion of an isolated classical gas in a vacuum
($P_{0}=0$), see Fig. \ref{Fig_Expansion}. We set $V_{\text{in}}$ and
$V_{\text{fin}}$ for simplicity The initial phase space is denoted by the
interior of the solid red ellipse $\boldsymbol{\Gamma}_{\text{in}}$ on the
left side in Fig. \ref{Fig_PhasePoint-Evolution}. The final phase space is
shown by the interior of the broken red ellipse on the left and the solid red
ellipse $\boldsymbol{\Gamma}_{\text{fin}}$ on the right in Fig.
\ref{Fig_PhasePoint-Evolution}. The gas is in a "restricted (\textit{i.e.,
}being confined in the left chamber)" equilibrium state with equilibrium
microstate probability%
\begin{equation}
f_{0}(\delta\mathbf{z}_{0})=e^{-\beta_{0}E(\mathbf{z}_{0})}/Z_{\text{in}%
}(\beta_{0},V_{\text{in}}) \label{probability-dist-initial}%
\end{equation}
at $t=0$; here, the initial partition function in the initial volume
$V_{\text{in}}$ is%
\begin{equation}
Z_{\text{in}}(\beta_{0},V_{\text{in}})\doteq%
%TCIMACRO{\tsum \limits_{\delta\mathbf{z}_{0}\in\boldsymbol{\Gamma}_{\text{in}%
%}}}%
%BeginExpansion
{\textstyle\sum\limits_{\delta\mathbf{z}_{0}\in\boldsymbol{\Gamma}_{\text{in}%
}}}
%EndExpansion
e^{-\beta_{0}E(\mathbf{z}_{0})}. \label{PF-initial}%
\end{equation}
We consider the set of microstates in the final phase space
$\boldsymbol{\Gamma}_{\text{fin}}$ and pick two microstates $\delta
\mathbf{z}_{0}$ and $\delta\mathbf{z}$ associated with $\mathbf{z}_{0}%
\in\boldsymbol{\Gamma}_{\text{in}}$ and $\mathbf{z}\in\overline
{\boldsymbol{\Gamma}}\boldsymbol{\doteq\Gamma}_{\text{fin}}\backslash
\boldsymbol{\Gamma}_{\text{in}}$; here, $\overline{\boldsymbol{\Gamma}%
}\boldsymbol{\doteq\Gamma}_{\text{fin}}\backslash\boldsymbol{\Gamma
}_{\text{in}}$ denotes the difference set of $\boldsymbol{\Gamma}_{\text{fin}%
}$ and $\boldsymbol{\Gamma}_{\text{in}}$. We use the notation $\overline
{\mathbf{z}}_{0}\doteq(\mathbf{z}_{0},\mathbf{z})$ to denote the two points.
Let us identify $(\mathbf{z}_{\gamma},\mathbf{z}_{\gamma}^{\prime})$ as the
\emph{unique} $1$-to-$1$ phase points obtained by the deterministic
Hamiltonian evolution of $(\mathbf{z}_{0},\mathbf{z})$ along the deterministic
or mechanical trajectories $\gamma=\gamma(\mathbf{z}_{0})$ and $\gamma
^{\prime}=\gamma^{\prime}(\mathbf{z})$\ corresponding to a given work protocol
$\mathcal{P}_{0}$; see Fig. \ref{Fig_PhasePoint-Evolution}. The probabilities
of the two paths are irrelevant for the microworks%
\begin{equation}%
\begin{array}
[c]{c}%
\Delta W_{\gamma}(\mathbf{z}_{0})=-(E(\mathbf{z}_{\gamma})-E(\mathbf{z}%
_{0})),\\
\Delta W_{\gamma^{\prime}}(\mathbf{z})=-(E(\mathbf{z}_{\gamma}^{\prime
})-E(\mathbf{z}));
\end{array}
\label{MicroWork-PhaseSpace}%
\end{equation}
see Conclusions \ref{Conclusion-MicroWorkHeat} and
\ref{Conclusion-Microheat-dE}.

While the initial EQ probability distribution $f_{0}(\delta\mathbf{z}_{0})$ is
nonzero for $\delta\mathbf{z}_{0}\in\boldsymbol{\Gamma}_{\text{in}}$, it is
common to think of $f_{0}(\delta\mathbf{z})=0$ for $\mathbf{z}\in
\overline{\boldsymbol{\Gamma}}$. This is an ideal situation and requires
taking the energy $E(\mathbf{z})=\infty$, but in reality, $f_{0}%
(\delta\mathbf{z})$ falls rapidly as we move into the right chamber away from
the left one in the initial macrostate. Moreover, during free expansion,
$f(\delta\mathbf{z})$ at $t>0$ is not going to remain zero. Therefore, we
\emph{formally assume} that the initial probability distribution $f_{0}%
(\delta\mathbf{z})$ is infinitesimally small by assigning to it a very large
positive energy
\begin{equation}
E(\mathbf{z})=e(\mathbf{z})/\varepsilon>0,\mathbf{z}\in\overline
{\boldsymbol{\Gamma}}\text{ at }t=0 \label{Energy-Infinite}%
\end{equation}
by introducing an infinitesimal positive quantity $\varepsilon$. At the end of
the calculation, we will take the limit $\varepsilon\rightarrow0^{+}$, which
simply means $\varepsilon\rightarrow0$ from the positive site. Under this
limit, the contribution from $e^{-\beta_{0}E(\mathbf{z})}$ will vanish:%
\[
e^{-\beta_{0}E(\mathbf{z})}\overset{\varepsilon\rightarrow0^{+}}{\rightarrow
}0.
\]
This allows us to recast the initial partition function as a sum over all
microstates $\overline{\mathbf{z}}\in\boldsymbol{\Gamma}_{\text{fin}}$:%
\begin{equation}
\lim_{\varepsilon\rightarrow0^{+}}Z_{\text{in}}^{\prime}(\beta_{0}%
,V_{\text{fin}},\varepsilon)\doteq\lim_{\varepsilon\rightarrow0^{+}}%
%TCIMACRO{\tsum \limits_{\delta\overline{\mathbf{z}}\in\boldsymbol{\Gamma
%}_{\text{fin}}}}%
%BeginExpansion
{\textstyle\sum\limits_{\delta\overline{\mathbf{z}}\in\boldsymbol{\Gamma
}_{\text{fin}}}}
%EndExpansion
e^{-\beta_{0}E(\overline{\mathbf{z}})}=Z_{\text{in}}(\beta_{0},V_{\text{in}});
\label{Limit-PF}%
\end{equation}
Thus, we can focus on $\boldsymbol{\Gamma}_{\text{fin}}$ as the phase space to
consider during any work protocol $\mathcal{P}_{0}$ instead of
$\boldsymbol{\Gamma}_{\text{in}}$. This allows us to basically use a
$1$-to-$1$ mapping between initial microstates $\overline{\mathbf{z}}%
_{0}\doteq(\mathbf{z}_{0},\mathbf{z})$ and final microstates $\overline
{\mathbf{z}}_{\gamma}\doteq(\mathbf{z}_{\gamma},\mathbf{z}_{\gamma}^{\prime}%
)$\textbf{ }discussed above.

We simply denote $\mathbf{z}_{0}$ or $\mathbf{z}$ by $\overline{\mathbf{z}}%
\in\boldsymbol{\Gamma}_{\text{fin}}$\ or $\overline{\mathbf{z}}_{\gamma}%
\in\boldsymbol{\Gamma}_{\text{fin}}$ for the Hamiltonian evolution of
$\overline{\mathbf{z}}$ along the microwork protocol from now on. We consider
the Jarzynski average of the exponential work in Eq. (\ref{GujratiRelation})
for the exclusive Hamiltonian and write it as%
\begin{equation}
\lim_{\varepsilon\rightarrow0^{+}}\left\langle e^{\beta_{0}\Delta
W}\right\rangle _{0}=\lim_{\varepsilon\rightarrow0^{+}}\frac{%
%TCIMACRO{\tsum \limits_{\delta\overline{\mathbf{z}}\in\boldsymbol{\Gamma
%}_{\text{fin}}}}%
%BeginExpansion
{\textstyle\sum\limits_{\delta\overline{\mathbf{z}}\in\boldsymbol{\Gamma
}_{\text{fin}}}}
%EndExpansion
e^{-\beta_{0}E(\overline{\mathbf{z}})}e^{-\beta_{0}[E(\overline{\mathbf{z}%
}_{\gamma})-E(\overline{\mathbf{z}})]}}{Z_{\text{in}}(\beta_{0},V_{\text{fin}%
},\varepsilon)}, \label{AverageExpWork2}%
\end{equation}
where we have used $\Delta W_{\gamma}(\overline{\overline{\mathbf{z}}%
})=-(E(\overline{\overline{\mathbf{z}}}_{\gamma})-E(\overline{\overline
{\mathbf{z}}}))$ in accordance with Eq. (\ref{MicroWork-PhaseSpace}). The
initial partition function in the original volume $V_{\text{in}}$ because of
the vanishing probabilities to be outside this volume. Because of the
$1$-to-$1$ mapping to $\overline{\mathbf{z}}_{\gamma}$, we can replace the sum
to a sum over $\overline{\mathbf{z}}_{\gamma}$, and at the same time cancel
the initial energy $E(\overline{\mathbf{z}})$ in the exponent; the
cancellation is \emph{exact} even for $\overline{\mathbf{z}}=\mathbf{z}$ for
which $E(\mathbf{z})\rightarrow+\infty$ in the limit $\varepsilon
\rightarrow0^{+}$. Because of this, the $\lim$ operation has no effect in the
numerator. The partition function in denominator reduces to $Z_{\text{in}%
}(\beta_{0},V_{\text{in}})$ as shown in Eq. (\ref{Limit-PF}). We finally find%
\begin{equation}
\left\langle e^{\beta_{0}\Delta W}\right\rangle _{0}=\frac{%
%TCIMACRO{\tsum \limits_{\delta\mathbf{z}_{\gamma}\in\boldsymbol{\Gamma
%}_{\text{fin}}}}%
%BeginExpansion
{\textstyle\sum\limits_{\delta\mathbf{z}_{\gamma}\in\boldsymbol{\Gamma
}_{\text{fin}}}}
%EndExpansion
e^{-\beta_{0}E(\mathbf{z}_{\gamma})}}{Z_{\text{in}}(\beta_{0},V_{\text{in}}%
)}=\frac{Z_{\text{fin}}(\beta_{0},V_{\text{fin}})}{Z_{\text{in}}(\beta
_{0},V_{\text{in}})}, \label{AverageExpWork3}%
\end{equation}
which is precisely what we wish to prove in Eq. (\ref{GujratiRelation}).

The situation with the inclusive Hamiltonian is the same as $\mathcal{H}%
^{\prime}=\mathcal{H}$ as before. This allows us to also prove Eq.
(\ref{GujratiRelation0}). Moreover, as said in the previous subsection, the
demonstration of Eqs. (\ref{GujratiRelation}) and (\ref{GujratiRelation0}) is
valid for any arbitrary process, not just the free expansion, the title of
this section.

It should be emphasized that allowing for a negligible probability is a common
practice even in EQ statistical mechanics where we evaluate the partition
function by considering all microstate, regardless of how negligibly small the
corresponding statistical weight is. This probability could even be zero. The
only difference is that the microstate is defined over the volume of the
system and not outside. We have allowed microstates in deriving Eqs.
(\ref{GujratiRelation}) and (\ref{GujratiRelation0}) with vanishing small or
zero probabilities. Here, we are considering microstates outside the volume of
the system, but mathematically, there is no difference.

By allowing such microstates in $\overline{\boldsymbol{\Gamma}}$, we have
shown that Eqs. (\ref{GujratiRelation}) and (\ref{GujratiRelation0}) hold even
for free expansion of a classical or quantum gas, where the Jarzynski equality fails.

\section{Conclusion\label{Sec-Conclusion}\textsc{ }}

The present work is motivated by a desire to clarify the connection between
work and energy change at the mechanical level. The goal is summarized in
Proposition \ref{Proposition-Goal}. The use of SI-quantities proves useful in
expressing the First law in which the generalized heat $dQ$ is proportional to
$dS$, while the generalized work $dW=-dE_{S}$ is isentropic change in the
energy $E$. This ensures that $dW$ and $dQ$ change $E$ independently so that
we need not consider any effect of $dQ$ while considering the change $dE_{S}$;
see Conclusion \ref{Conc-Isentropic-Isometric-Changes}.

As the probability during microwork does not change, a microstate can also be
treated as a mechanical system during microwork; see Conclusion
\ref{Conclusion-MicroWorkHeat}. As the change $\Delta E_{k}$ is an
SI-quantity, it must be related to the work that is also an SI-quantity. This
gives $\Delta E_{k}$ $=-\Delta_{k}W$ (internal scheme), which contradicts the
current but erroneous practice of using $\Delta E_{k}$ $\overset{?}{=}%
\Delta_{k}\widetilde{W}$ (external scheme) in diverse applications in
nonequilibrium statistical thermodynamics mentioned earlier; see the
discussion in Sec. \ref{Sec-MistakenIdentity}. The mistake results in setting
\begin{equation}
\Delta_{\text{i}}W_{k}=0, \label{IrreversibleWork}%
\end{equation}
which contradicts the most important result of this work that $\Delta
_{\text{i}}W_{k}\neq0$ in almost all cases as concluded in Conclusion
\ref{Conclusion-nonzeo-diW}. This has been justified by a purely mechanical
system in Sec. \ref{Sec-Example-MechSystem}, which shows that Eq.
(\ref{IrreversibleWork}) holds only when there is mechanical equilibrium. This
equilibrium occurs under a very special condition of zero force imbalance
$F_{\text{t}}$. However, as shown in Sec. \ref{Sec-Example-MechSystem}, Eq.
(\ref{IrreversibleWork}) does not hold in general for a mechanical system. The
thermodynamic system investigated in Sec.
\ref{Sec-ForceImbalance-Thermodynamic} also shows that Eq.
(\ref{IrreversibleWork}) does not hold in general.

As Conclusion \ref{Conclusion-ForceImbalance-Dissipation} shows, the presence
a \emph{nonzero force imbalance} is \emph{necessary} (but not sufficient) for
dissipation in the system. The force imbalance is what gives rise to
thermodynamic forces, whose importance does not seem to have been acknowledge
by many workers that consistently use the mistaken identity $\Delta E_{k}$
$=\Delta_{k}\widetilde{W}$ in Eq. (\ref{MistakenIdentity}). This means that
Eq. (\ref{IrreversibleWork}) holds in these studies, which cannot include any
dissipation. This is consistent with Theorem
\ref{Theorem-Work-Energy-Principle} that there is no irreversibility
($\Delta_{\text{i}}W=0$) if we accept $\Delta E_{k}$ $=\Delta_{k}\widetilde
{W}$. This also shows that the JE is not a valid identity.

The existence of $\Delta_{\text{i}}E_{k}=-\Delta_{\text{i}}W_{k}$ is one of
the most surprising result of our approach, which appears almost
counter-intuitive and has remain hitherto unrecognized in the field because of
it. It is so because it is well known that it is impossible to have
$\Delta_{\text{i}}E\neq0$, where
\[
\Delta_{\text{i}}E=%
%TCIMACRO{\tint \nolimits_{\gamma}}%
%BeginExpansion
{\textstyle\int\nolimits_{\gamma}}
%EndExpansion
d_{\text{i}}E,
\]
in terms of $d_{\text{i}}E$ is given in Eq. (\ref{Av-diE}). However, we have
shown, see Claim \ref{Claim-diEk-diE} that even if $d_{\text{i}}E_{k}\neq0$,
$d_{\text{i}}E$ always vanishes.

As we see from Remarks \ref{Remark-DW_k-P_0} and
\ref{Remark-General Process-DWk}, the determination of the SI-work $\Delta
W_{k}$ or $\Delta W_{k}^{\prime}$ over\ the entire process $\mathcal{P}_{0}$
is extremely easy as it is independent of the nature of the actual process
$\mathcal{P}_{0}$; it is the same for all processes between the same two
states since $\Delta E_{k}$ or $\Delta E_{k}^{\prime}$\ is an SI-quantity and
is determined by the initial and the final states but not on the nature of the
process. Because of this, Eqs. (\ref{GujratiRelation}) and
(\ref{GujratiRelation0}) hold for all processes between the same two states.
This is very different from $\Delta_{\text{e}}W_{k}=-\Delta_{k}\widetilde{W}$
or $\Delta_{\text{e}}W_{k}^{\prime}=-\Delta_{k}\widetilde{W^{\prime}}$, which
depends strongly on $\mathcal{P}$ during which it is nonzero and vanishes
during $\overline{\mathcal{P}}$. The difference is $\Delta_{\text{i}}%
W_{k}=\Delta_{\text{i}}W_{k}^{\prime}$.

The correct identification $\Delta W_{k}=-\Delta E_{k}$\ or $\Delta
W_{k}^{\prime}=-\Delta E_{k}^{\prime}$\ ensures that%
\[
\Delta_{\text{i}}W_{k}=\Delta_{\text{i}}W_{k}^{\prime}\geq0
\]
in conformity with its ubiquitous nature as is evident from the existence of
pressure fluctuations in Eq. (\ref{PressureFluctuations}); see Conclusion
\ref{Conclusion-nonzeo-diW}. With this identification, Eqs.
(\ref{GujratiRelation}) and (\ref{GujratiRelation0}) always hold in the
exclusive and inclusive approaches, respectively. They hold not only for free
expansion, but also for any arbitrary process $\mathcal{P}_{0}$ as we show in
Sec. \ref{Sec-FreeExpansion}. Thus, our theorem differs from the JE, which
fails for free expansion.

With the ubiquitous nature of $d_{\text{i}}W_{k}=-d_{\text{i}}E_{k}$ and the
SI-equivalance of $dW_{k}=-dE_{k}$ as opposed to the mistaken identity
$d\widetilde{W}_{k}=dE_{k}$, we believe that the correction requires complete
reassessment of current applications; see Sec. \ref{Sec-MistakenIdentity}. In
particular, we need to recognize the importance of $d_{\text{i}}%
W_{k}=-d_{\text{i}}E_{k}$, which has not been hitherto recognized. Without
acknowledging this fact in any theory results in a total absence of
dissipation regardless of the time dependence the work protocol such as in the
Jarzynski equality.


\begin{thebibliography}{99}                                                                                               %


\bibitem {deGroot}S.R. de Groot and P. Mazur, \textit{nonequilibrium
Thermodynamics}\textbf{, }First Edition, Dover, New York (1984).

\bibitem {Prigogine}D. Kondepudi and I. Prigogine, \textit{Modern
Thermodynamics}, John Wiley and Sons, West Sussex (1998).

\bibitem {Maugin}G.A. Maugin, \textit{The Thermodynamics of Nonlinear
Irreversible Behaviors: An Introduction}, World Scientific, Singapore (1999).

\bibitem {Gujrati-I}P.D. Gujrati, Phys. Rev. E \textbf{81}, 051130 (2010);
P.D. Gujrati, arXiv:0910.0026.

\bibitem {Gujrati-II}P.D. Gujrati, Phys. Rev. E \textbf{85}, 041128 (2012);
P.D. Gujrati, arXiv:1101.0438.

\bibitem {Gujrati-III}P.D. Gujrati, Phys. Rev. E \textbf{85}, 041129 (2012);
P.D. Gujrati, arXiv:1101.0431.

\bibitem {Gujrati-Entropy1}P.D. Gujrati, arXiv:1304.3768.

\bibitem {Gujrati-Entropy2}P.D. Gujrati, Entropy, \textbf{17}, 710 (2015).

\bibitem {Note-xi-dependence}How the entropy $S$ and the Hamiltonian
$\mathcal{H}$ develops a dependence on $\boldsymbol{\xi}$ is explained by an
example in Sec. \ref{Sec-Friction}.

\bibitem {Bochkov}G.N. Bochkov and Yu.E. Kuzovlev, Sov. Phys. JETP
\textbf{45}, 125 (1977); \textit{ibid} \textbf{49}, 543 (1979).

\bibitem {Jarzynski}C. Jarzynski, Phys. Rev. Lett. \textbf{78}, 2690 (1997);
C.R. Physique \textbf{8}, 495 (2007).

\bibitem {Crooks}G.E. Crooks, Phys. Rev. E \textbf{60}, 2721 (1999).

\bibitem {Pitaevskii}L.P. Pitaevskii, Physics-Uspekhi \textbf{54}, 625 (2011).

\bibitem {Sekimoto}K. Sekimoto, J. Phys. Soc. Japan, \textbf{66}, 1234 (1997).

\bibitem {Seifert}U. Seifert, Eur. Phys. J. B \textbf{64}, 423 (2008).

\bibitem {Lebowitz}H. Spohn and J.L. Lebowitz, Adv. Chem. Phys. \textbf{38},
109 (1978).

\bibitem {Alicki}R. Alicki, J. Phys. A \textbf{12}, L103 (1979).

\bibitem {Cohen}E.G.D. Cohen and D. Mauzerall, J. Stat. Mech. P07006 (2004);
Mol. Phys. \textbf{103}, 21 (2005).

\bibitem {Jarzynski-Cohen}C. Jarzynski, J. Stat. Mech. P09005 (2004).

\bibitem {Sung}J. Sung, arXiv:cond-mat/0506214v4.

\bibitem {Gross}D.H.E. Gross, arXiv:cond-mat/0508721v1.

\bibitem {Jarzynski-Gross}C. Jarzynski, arXiv:cond-mat/0509344v1.

\bibitem {Peliti}L. Peliti, J. Stat. Mech. P05002 (2008).

\bibitem {Rubi}J.M.G. Vilar and J.M. Rubi, Phys. Rev. Lett. \textbf{101},
020601 (2008).

\bibitem {Jarzynski-Rubi}J. Horowitz and C. Jarzynski, Phys. Rev. Lett.
\textbf{101}, 098901 (2008).

\bibitem {Rubi-Jarzynski}J.M.G. Vilar and J.M. Rubi, Phys. Rev. Lett.
\textbf{101}, 098902 (2008).

\bibitem {Peliti-Rubi}L. Peliti, Phys. Rev. Lett. \textbf{100}, 098903 (2008).

\bibitem {Rubi-Peliti}J.M.G. Vilar and J.M. Rubi, Phys. Rev. Lett.
\textbf{100}, 098904 (2008).

\bibitem {Fermi}E. Fermi, \textit{Thermodynamics}, Dover, New York (1956).

\bibitem {Kestin}J. Kestin, \textit{A course in Thermodynamics}, McGraw-Hill
Book Company, New York (1979).

\bibitem {Gujrati-Heat-Work0}P.D. Gujrati, arXiv:1105.5549.

\bibitem {Gujrati-Heat-Work}P.D. Gujrati, arXiv:1206.0702.

\bibitem {Gujrati-GeneralizedWork}P.D. Gujrati, arXiv:1702.00455.

\bibitem {Woods}L.C. Woods, \textit{The Thermodynamics of Fluids Systems},
Oxford University Press, Oxford (1975).

\bibitem {Landau}L.D.\ Landau, E.M. Lifshitz, \textit{Statistical Physics},
Vol. 1, Third Edition, Pergamon Press, Oxford (1986).

\bibitem {Gibbs}J.W. Gibbs, \textit{Elementary Principles in Statistical
Mechanics}, Charles Scribner's Sons, N.Y. (1902).

\bibitem {Gujrati-jarzynski-SecondLaw}P.D. Gujrati, arXiv:1802.08084.

\bibitem {Muschik}W. Muschik, arXiv:1603.02135; Continuum Mech. Thermodyn.,
DOI 10.1007/s00161-016-0517-y (online).

\bibitem {Gujrati-QuantumHeat}P.D. Gujrati, Iakov Boyko and Tyler Johnson, arXiv:1512.08744.

\bibitem {Bender}Bender, Brody and Meister, J. Phys. A \textbf{33}, 4427 (2000).
\end{thebibliography}
\end{document}